\documentclass[preprint,12pt]{elsarticle}

\journal{Theoretical Computer Science}

\usepackage{amsmath,amssymb,amsfonts}
\usepackage{amsthm}
\usepackage{graphicx}
\usepackage{xcolor}
\usepackage{tikz}
\usepackage{booktabs}
\usepackage[hidelinks]{hyperref}
\biboptions{numbers,sort&compress}
\graphicspath{{./}}

\newtheorem{theorem}{Theorem}
\newtheorem{lemma}{Lemma}

\newtheorem{corollary}{Corollary}

\begin{document}

\begin{frontmatter}

\title{Computing the Arc-Deletion Distance to Orchard Networks is NP-hard\tnoteref{t1}}
\tnotetext[t1]{Supported by the National Natural Science Foundation of China (No. 12345678).}

\author[addr1]{Peng Li}
\ead{lipeng1@hjnu.edu.cn}

\author[addr2]{Zhiwei Liu}

\author[addr2]{Yangjing Long\corref{cor1}}
\ead{Yangjing@ccnu.edu.cn}
\cortext[cor1]{Corresponding author.}

\address[addr1]{School of Mathematics and Statistics, Hanjiang Normal University, Shiyan 442000, China}
\address[addr2]{School of Mathematics, Central China Normal University, Wuhan 430079, China}

\begin{abstract}
Phylogenetic networks generalize phylogenetic trees by allowing reticulate evolutionary events such as horizontal gene transfer and hybridization. Among the many subclasses of phylogenetic networks, orchard networks have attracted increasing attention due to their structural and algorithmic properties. In this paper, we study the arc-deletion distance to orchard networks, defined as the minimum number of reticulate arcs whose deletion transforms a phylogenetic network into an orchard network. We prove that computing this distance is NP-hard via a polynomial-time reduction from the Degree-3 Vertex Cover problem. Our result establishes the computational intractability of this proximity measure and contributes to the complexity theory of phylogenetic network transformations.
\end{abstract}

\begin{keyword}
Phylogenetic networks \sep Orchard networks \sep Arc-deletion distance \sep Tree-based networks \sep NP-completeness \sep Computational complexity
\end{keyword}

\end{frontmatter}

\section{Introduction}\label{sec:int}
Phylogenetic trees describe vertical evolutionary relationships. They show how species diverge from common ancestors. However, they fail to represent non-tree-like processes such as horizontal gene transfer (HGT) and hybridization. 
Such events are collectively referred to as reticulate events and generate conflicting evolutionary signals \cite{fontaine2015, cui2013}. 
A single tree cannot capture such complexity. To address this, researchers use phylogenetic networks. These networks allow internal nodes with in-degree greater than one, making it possible to model reticulate evolution.

Phylogenetic networks come in various types, each defined by structural properties or biological relevance. Examples include tree-child networks \cite{cardona2008}, tree-based networks \cite{francis2015}, and the more recently introduced orchard networks \cite{erdos2019}. Among these, orchard networks have received growing attention. They offer two key advantages: they can be reduced to a single-leaf vertex through a sequence of cherry-picking operations, and they can be constructed by adding horizontal arcs---representing HGT---to a tree structure \cite{vaniersel2023}. These features make orchard networks both structurally expressive and computationally tractable. As a result, they have become a central focus in recent studies of phylogenetic modeling and algorithms.

In practice, many phylogenetic networks do not meet the conditions required to be orchard networks.
This motivates the following proximity measure: for a given network, what is the minimum number of arcs that must be removed to obtain an orchard network? This question gives rise to a new proximity measure---the arc-deletion distance---which quantifies how far a network deviates from the orchard class.

\begin{center}
\fbox{\begin{minipage}{0.86\linewidth}
\textbf{Problem: $E_{\mathrm{OR}}$-DISTANCE} \\
\textbf{Input:} A network $N$ over set $X$ and a natural number $k$. \\
\textbf{Question:} Can $N$ be transformed into an orchard network by deleting at most $k$ reticulate arcs?
\end{minipage}}
\end{center}

Proximity measures based on arc deletion have received limited attention in phylogenetics. Fischer et al.\cite{fischer2023} showed that computing the arc-deletion distance for edge-based networks is NP-hard. For orchard networks, Iersel et al.\ \cite{vaniersel2023} studied a related measure based on arc additions. They proved that determining the minimum number of arcs needed to transform a general phylogenetic network into an orchard network is NP-hard. A recent bachelor's thesis~\cite{susanna2022} also compared arc-deletion-based measures with those based on leaf additions.

In this paper, we investigate the arc-deletion proximity measure for orchard networks. We formally define the arc-deletion distance to the class of orchard networks and prove that computing this distance is NP-hard.

Our work differs from the arc-addition and leaf-addition approaches in \cite{vaniersel2023}. In contrast to transforming a network by adding new structure, we study transformations obtained by deleting admissible reticulate arcs while preserving the validity of the network. Although our reduction uses part of the gadget framework introduced in \cite{vaniersel2023}, the analysis of deletable arcs, zig-zag decompositions after deletions, and HGT-consistent labelings requires different arguments.

The structure of the paper is as follows. Section 2 introduces the necessary preliminaries and reviews the definitions and structural properties of orchard and tree-based networks, which provide the foundation for our problem formulation. Section 3 formally defines the arc-deletion distance problem. Sections 4 and 5 present a proof that computing this distance is NP-hard, via a polynomial-time reduction from the DEGREE-3 VERTEX COVER problem. Section 6 concludes the paper and outlines directions for future work.

\section{Preliminaries}\label{sec:pre}
In this paper, let $X$ denote a non-empty set. For a finite directed graph \( G \), we write \( V(G) \) and \( E(G) \) for the sets of vertices and arcs of \( G \), respectively. For an arc \( uv \), the vertex \( v \) is called the \emph{head} of \( uv \), denoted by \(\text{head}(uv)\), and the vertex \( u \) is called the \emph{tail}, denoted by \(\text{tail}(uv)\).

\subsection{Phylogenetic Networks}\label{subsec:phynet}

Given a non-empty set \( X \), a \emph{rooted binary phylogenetic network} on \( X \) is a directed acyclic graph (DAG) satisfying the following conditions.
This graph contains the following four types of vertices.
\vspace{-1.5ex}
\begin{itemize}
\item It has a unique \emph{root} with indegree 0 and outdegree 1;
\vspace{-1.5ex}
\item Each \emph{tree vertex} has indegree 1 and outdegree 2;
\vspace{-1.5ex}
\item Each \emph{reticulation} has indegree 2 and outdegree 1;
\vspace{-1.5ex}
\item Each \emph{leaf} has indegree 1 and outdegree 0, and is bijectively labeled by an element of \( X \).
\vspace{-1.5ex}
\end{itemize}

For simplicity, we refer to rooted binary phylogenetic networks as \emph{networks} throughout the paper. 
Based on the types of their endpoints, arcs are classified into the following three categories:

\vspace{-1.5ex}
\begin{itemize}
\item An arc whose tail is the root is called a \emph{root arc};
\vspace{-1.5ex}
\item If \( \text{head}(a) \) is a reticulation, then \( a \) is called a \emph{reticulate arc};
\vspace{-1.5ex}
\item All other arcs are called \emph{tree arcs}.
\end{itemize}

In all networks in this paper, all arcs are oriented from top to bottom.
A network that contains no reticulations is referred to as a \emph{rooted binary phylogenetic tree}, or simply a \emph{phylogenetic tree}.

\subsection{Orchard Networks}\label{subsec:orchard}

Let \( N \) be a network. Two leaves \( x \) and \( y \) of \( N \) form a \emph{cherry} \( (x, y) \) if they are siblings. They form a \emph{reticulation cherry} \( (x, y) \) if the parent \( p_x \) of \( x \) is a reticulation, and the parent of \( p_x \), denoted \( p_{p_x} \), is also the parent of \( y \).
Suppressing a vertex of indegree 1 and outdegree 1 means deleting the vertex and replacing its incident arcs by a single arc.

\begin{figure}[htbp]
 \centering
 \includegraphics[width=0.5\linewidth]{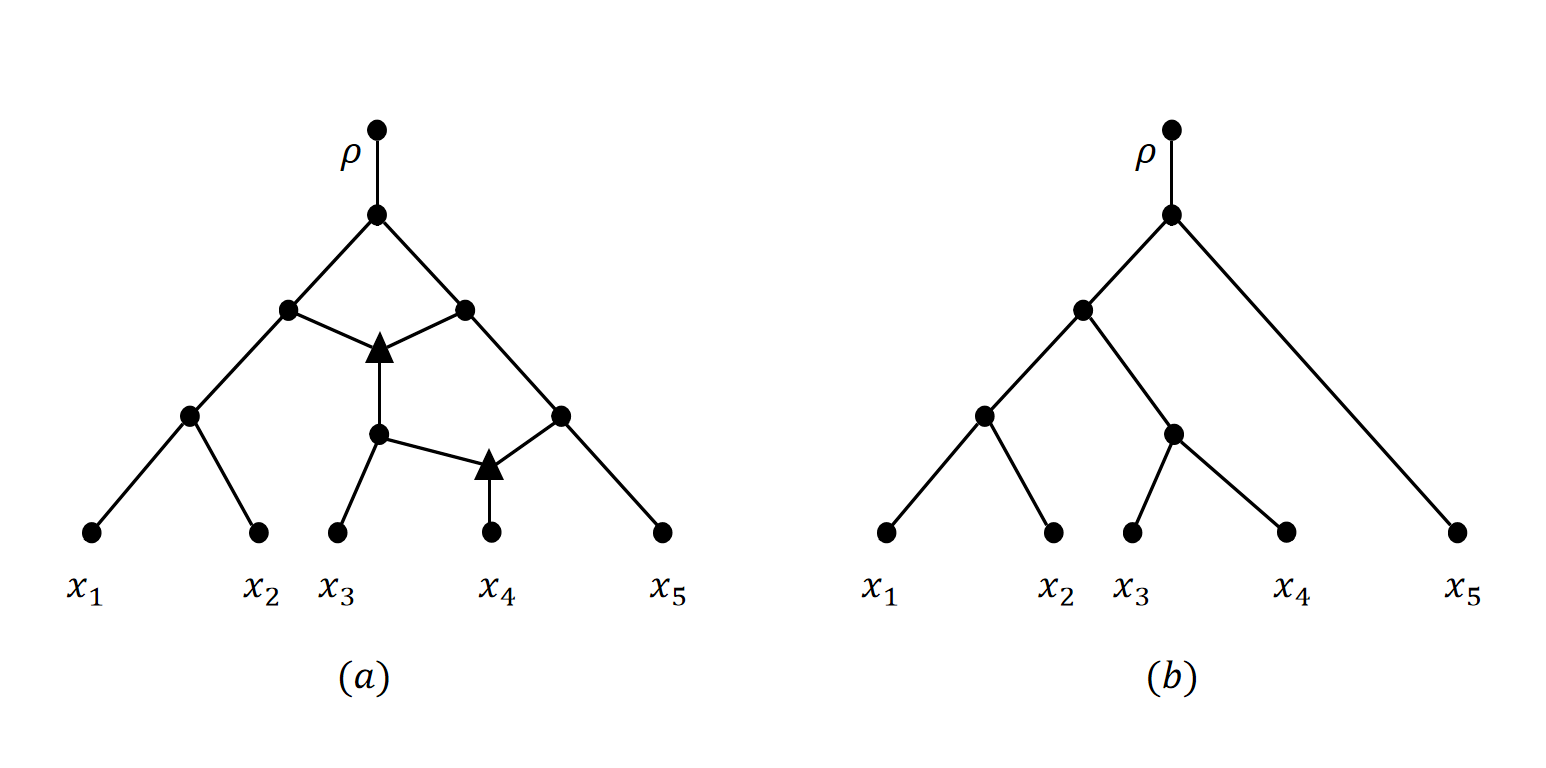}
\caption{
(a) Phylogenetic network $N_1$ with leaf set $\{x_1, x_2, \ldots, x_5\}$. $\{x_1, x_2\}$ forms a cherry, while $\{x_4, x_5\}$ forms a reticulated cherry;
(b) Phylogenetic tree $T_1$ with leaf set $\{x_1, x_2, \ldots, x_5\}$.
Round and triangle vertices denote tree vertices and reticulations, respectively. 
}

 \label{fig:reticulated}
\end{figure}

As illustrated in Fig.~\ref{fig:reticulated}, a phylogenetic network may contain reticulation vertices, and the figure provides explicit examples of a cherry and a reticulated cherry within such a network.

\noindent
We define two simplification operations on $N$ as follows:
\begin{itemize}
 \item \textbf{Reducing a cherry} $(x, y)$: 
 delete one of the leaves, say $x$, and then suppress its parent $p_x$; that is, 
 if $p_x$ now has indegree~1 and outdegree~1, remove $p_x$ and connect its parent 
 directly to its remaining child.
 \vspace{0.5ex}
 \item \textbf{Reducing a reticulation cherry} $(x, y)$: 
 delete the arc from $p_y$ to $p_x$. 
 After this deletion, suppress any vertex (typically $p_x$ and $p_y$) whose indegree 
 and outdegree are both~1.
\end{itemize}

Let \( N(x, y) \) denote the network obtained from \( N \) by applying the reduction to the pair \( (x, y) \). A network \( N \) is \emph{orchard} if there exists a finite ordered sequence \( S = \{(x_1, y_1), (x_2, y_2), \ldots\} \) such that applying cherry or reticulation cherry reductions in order reduces \( N \) to a network with a single leaf.

Let $N$ be a network and let 
$S = \{(x_1, y_1), (x_2, y_2), \ldots, (x_k, y_k)\}$ 
be an ordered sequence of pairs of leaves. 
We denote by $N_S$ 
the network obtained from $N$ by successively applying the 
cherry or reticulation cherry reductions corresponding to the pairs in $S$ 
in the given order, see Fig.~\ref{fig:N1S} for example.

\begin{figure}[htbp]
 \centering
 \includegraphics[width=0.6\linewidth]{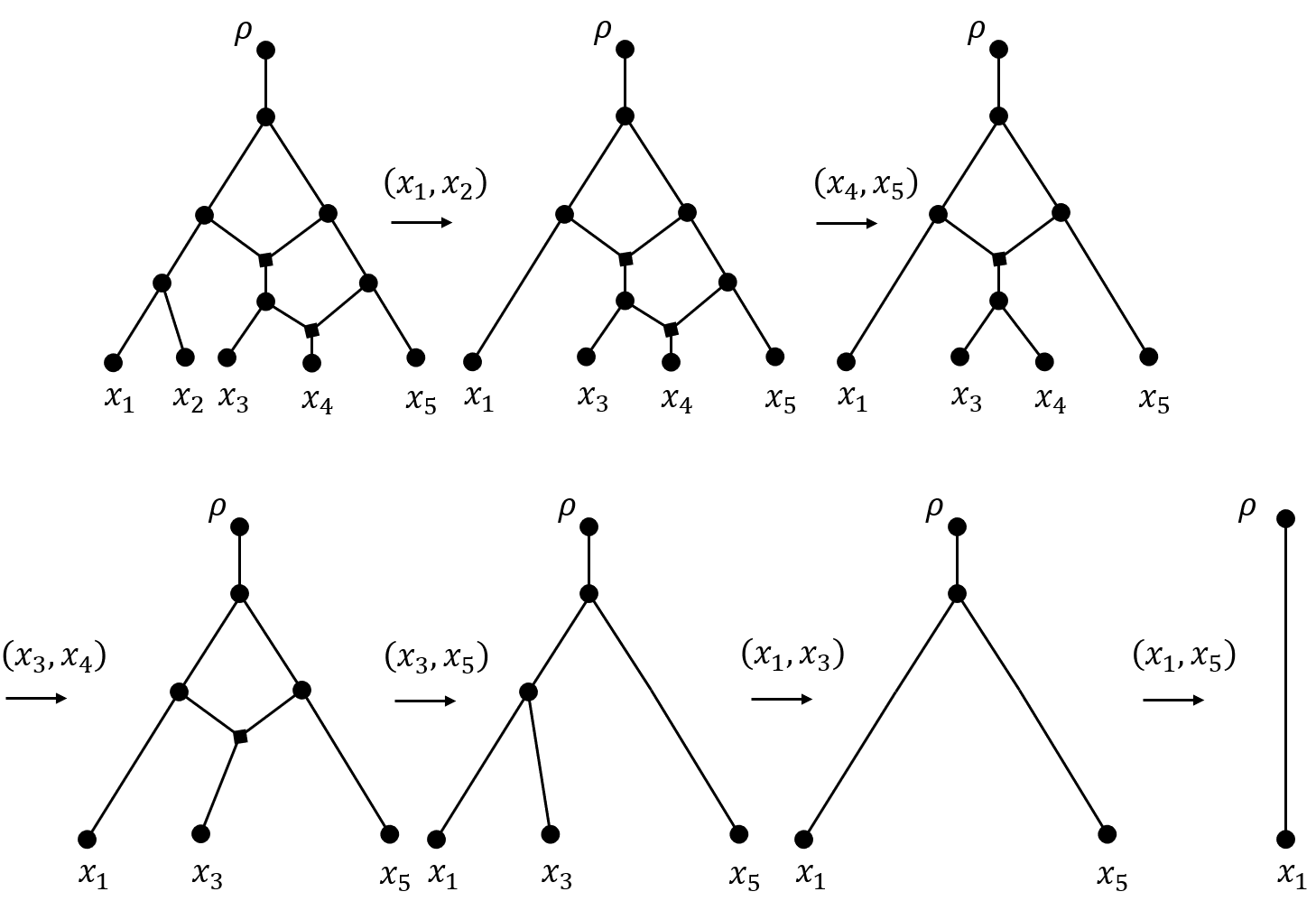} 
\caption{$(N_2)_S$ is a phylogenetic network containing only a leaf. 
$S = \{ (x_1, x_2), (x_4, x_5), (x_3, x_4), (x_3, x_5), (x_1, x_3), (x_1, x_5) \}$.}
 \label{fig:N1S}
\end{figure}

In addition to the reduction-based definition above, orchard networks can also be characterized using arc decompositions.

Let $N$ be a network. A \emph{cherry shape} is a subgraph of $N$ consisting of vertex set $\{x, y, p\}$ and arc set $\{p x, p y\}$, where $x$ and $y$ are called the \emph{endpoints}, and $p$ is the \emph{internal vertex}. We denote a cherry shape by $C$.
A \emph{reticulation cherry shape} is a subgraph of $N$ consisting of vertex set $\{x, y, p_x, p_y\}$ and arc set $\{p_x x, p_y p_x, p_y y\}$, where $x$ and $y$ are the \emph{endpoints}, $p_x$ and $p_y$ are \emph{internal vertices}, and $p_x$ is a reticulation. We denote a reticulation cherry shape by $R$.

An arc $a$ of $N$ is \emph{covered} by a cherry shape $C$ or a reticulation cherry shape $R$ if $a$ is an arc of $C$ or $R$, respectively. That is, $a \in E(C)$ or $a \in E(R)$.

A \emph{cherry cover} of $N$ is a collection $P = \{C_1, \dots, C_k, R_1, \dots, R_m\}$ of cherry shapes and reticulation cherry shapes such that every arc of $N$, except the root arc, is covered exactly once by some shape in $P$.

We further define the \emph{cherry covering auxiliary graph} $AG = (W, A)$ of $P$ as follows: each cherry or reticulation cherry shape $B \in P$ corresponds to a vertex $v_B \in W$; for any two shapes $B, B' \in P$, if an internal vertex of $B'$ is an endpoint of $B$, then there is an arc $v_Bv_{B'}$ in $A$. In this case, we say $B$ is \emph{above} $B'$.

If $AG$ contains a directed cycle, then $P$ is \emph{cyclic} ; otherwise, it is \emph{acyclic}.

The following characterization follows from \cite{vaniersel2021}.

\begin{theorem}\label{thm:acyclic}
Let $N$ be a network. Then $N$ is an orchard network if and only if it admits an acyclic cherry cover.
\end{theorem}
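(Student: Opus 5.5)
We prove both directions by induction on the number of arcs of $N$. Each reduction step is matched with the removal of one shape that is ``minimal'' in the cover. Throughout we use that every non-root arc of a binary network has a head that is a tree vertex, a reticulation or a leaf. So the arcs leaving a tree vertex $p$ must either form a single cherry shape with internal vertex $p$, or be split between shapes in a constrained way.

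\emph{($\Rightarrow$)} Suppose $N$ is orchard with reduction sequence $S=\{(x_1,y_1),\dots\}$. We induct on $|S|$. If $N$ has a single leaf, only the root arc exists and the empty cover is acyclic. Otherwise let $N'=N(x_1,y_1)$, which is orchard with a shorter sequence, so by induction it has an acyclic cherry cover $P'$. We lift $P'$ to $N$ according to the type of the first reduction.

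\begin{itemize}
\item \emph{Cherry $(x,y)$ with parent $p$.} In $N'$, the arc $p'y$ arising from suppressing $p$ lies in a unique shape $B\in P'$. In $N$ we replace $p'y$ by $p'p$ inside $B$, and add the cherry shape $C=\{px,py\}$. The shape $B$ now has endpoint $p$ instead of $y$, so $v_B\to v_C$ is the only new arc. Since $C$ has no outgoing arcs, because its endpoints are leaves, no cycle is created.
\item \emph{Reticulation cherry $(x,y)$.} Here the suppressed vertices $p_x,p_y$ each merge two arcs of $N'$. We split these arcs back in the shapes containing them and add the reticulation cherry shape $R=\{p_xx,p_yp_x,p_yy\}$. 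Again $R$ has only leaf endpoints, so it is a sink in $AG$. The only new arcs of $AG$ point into $v_R$, so acyclicity is preserved.
\end{itemize}

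A small check is needed that the modified shapes are still cherry or reticulation cherry shapes. This holds because only an endpoint is renamed, or an internal arc is subdivided at a vertex that becomes an endpoint.

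\emph{($\Leftarrow$)} Let $P$ be an acyclic cherry cover. Since $AG$ is acyclic, it has a sink $v_B$, i.e.\ no endpoint of $B$ is an internal vertex of another shape. The key step is that both endpoints of $B$ are leaves.

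Suppose an endpoint $z$ of $B$ were not a leaf. Then $z$ has an outgoing arc, and that arc is covered by some shape $B'$. In any shape, the tail of an arc is an internal vertex: in a cherry shape it is $p$; in a reticulation cherry shape it is $p_y$ for $p_yp_x,p_yy$ and $p_x$ for $p_xx$. Hence $z$ is internal in $B'$, giving an arc $v_Bv_{B'}$ and contradicting that $v_B$ is a sink.

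So $B$ is a cherry $(x,y)$ of $N$, or, if it is a reticulation cherry shape, $(x,y)$ is a reticulation cherry, since $p_x$ is a reticulation with parent $p_y$, which is also the parent of $y$. We reduce it to obtain $N(x,y)$ and build a cover of $N(x,y)$ from $P\setminus\{B\}$ by merging arcs at suppressed vertices. For example, the arc into $p$ (respectively into $p_y$, and the other arc into $p_x$) now ends at $y$ (respectively continues to the suppressed child). The shapes containing those incoming arcs get the merged arc, with the endpoint changed to a leaf or to the former child. The $AG$ of the new cover is obtained from $AG\setminus v_B$ by at most relabelling endpoints. In particular, a shape whose endpoint was $p_x$ now has as endpoint the head of the merged arc, which is $x$'s replacement or a leaf, so no cycle is introduced. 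By induction on the number of arcs, $N(x,y)$ is orchard, hence so is $N$.

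The main obstacle is the bookkeeping in the reticulation cherry case in both directions. After deleting $p_yp_x$, the vertex $p_x$ merges its other in-arc with $p_xx$, and $p_y$ merges its in-arc with $p_yy$. We must verify that the shapes containing these in-arcs remain valid shapes: a merged arc whose tail is the same internal vertex keeps the shape's form, and a reticulation arc may stop being reticulate. We must also verify that acyclicity is not broken. I would handle this by noting that every new endpoint is either a leaf or a vertex that was already an endpoint of the same shape's descendant chain, so every arc of the new $AG$ corresponds to a directed path in the old $AG$.
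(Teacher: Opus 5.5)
The paper gives no proof of Theorem~\ref{thm:acyclic}; it only quotes the characterization from the literature. Your argument is a correct, self-contained proof, so it necessarily takes a different route. It rests on two observations. A sink $v_B$ of $AG$ must have leaf endpoints, because the tail of every arc of a shape is internal to that shape; hence $B$ is a cherry or a reticulated cherry. Reducing it turns $P\setminus\{B\}$ into a cover of $N(x,y)$ whose auxiliary graph is exactly $AG-v_B$. Conversely, lifting a cover of $N(x_1,y_1)$ back to $N$ only adds the new shape as a sink. What your route buys is independence from the external source and an explicit correspondence between cherry-picking sequences and reverse topological orders of $AG$. The citation buys only brevity.

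Several points should be tightened.

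\emph{Base case.} Your claim that a one-leaf network has only the root arc is false. For example, the network with arcs $\rho w$, $wu$, $wr_2$, $ur_1$, $ur_2$, $r_2r_1$, $r_1x$ is a valid binary network on $\{x\}$ with two reticulations. Its unique cherry cover $\{\{r_1x,ur_1,ur_2\},\{r_2r_1,wr_2,wu\}\}$ is cyclic. Under the paper's literal wording this network would be orchard with $S$ empty, and the equivalence would fail. So ``reduces to a single leaf'' must mean reduction to the trivial network consisting of the root arc $\rho x$, and that is the base case you should use in both directions.

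\emph{Sink step.} You should note that $B'\neq B$. This holds because an endpoint of a shape is never one of its own internal vertices, by acyclicity and the absence of loops and parallel arcs.

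\emph{The vague part of your last paragraph.} Replace it with one structural fact. The two out-arcs of a tree vertex always lie in a single shape, as $p$ of a cherry shape or as $p_y$ of a reticulation cherry shape; they are never ``split between shapes'' as you suggest. The out-arc of a reticulation is always the arc $p_xx$ of a reticulation cherry shape. Hence every non-root, non-leaf vertex is internal in exactly one shape.

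Now let $q$ be the other parent of $p_x$ and $s$ the parent of $p_y$. The arcs you modify are $qp_x$ and $sp_y$ in $N$, and $qx$ and $sy$ in $N'$ (and $p'p$, $p'y$ in the cherry case). None of them can be the middle arc of its shape: $p_x$ is internal only in $B$, and the other heads are tree vertices or leaves. So only endpoints are renamed, and each renamed endpoint is either a leaf (direction $\Leftarrow$) or internal only to the newly added shape (direction $\Rightarrow$). No ``path in the old $AG$'' argument is needed.

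\emph{Degenerate case.} Treat separately the case where the parent of $p$, or of $p_y$, is the root. There the relevant arc is the uncovered root arc.
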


\begin{figure}[htbp]
 \centering
 \includegraphics[width=0.4\linewidth]{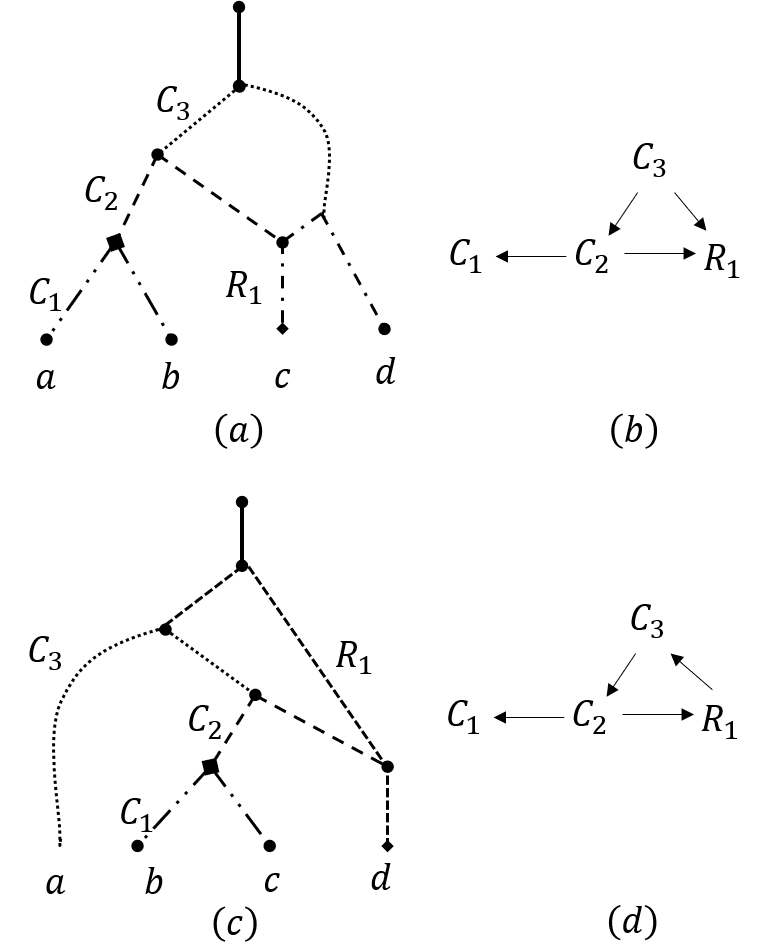}
 \caption{(a) Phylogenetic network $N_3$, with its cherry cover $P_1 = \{C_1, C_2, C_3, R_1\}$, represented by distinct dashed styles;
 (b) The cherry cover auxiliary graph of $P_1$, which is acyclic;
 (c) Phylogenetic network, with its cherry cover $P_2 = \{C_1, C_2, C_3, R_1\}$;
 (d) The cherry cover auxiliary graph of $P_2$, which is cyclic.}

 \label{fig:acyclic}
\end{figure}

Orchard networks can also be characterized by a special type of vertex labeling, called \emph{HGT-consistent labeling} (Definition 6 of \cite{vaniersel2022}).
Let $N$ be a network with vertex set $V(N)$. A function $t: V(N) \rightarrow \mathbb{R}$ is called a \emph{non-temporal labeling} of $N$ if it satisfies the following conditions:

\vspace{-1ex}
\begin{itemize}
\item For every arc $uv \in A(N)$, $t(u) \leq t(v)$ and equality holds only if $v$ is a reticulation;
\item For each internal vertex $u$, there is a child $v$ of $u$ such that $t(u) < t(v)$;
\item For every reticulation $r$ with parents $u$ and $v$, at most one of $t(u) = t(r)$ and $t(v) = t(r)$ holds.
\end{itemize}

Under a non-temporal labeling, an arc is called \emph{horizontal} if its endpoints have the same label, and \emph{vertical} otherwise.

A non-temporal labeling is said to be \emph{HGT-consistent} (horizontal gene transfer consistent) if every reticulation vertex in $N$ has exactly one incoming horizontal arc.

\begin{theorem}[Theorem 1 in \cite{vaniersel2022}] \label{thm:HGT}
Let $N$ be a network. Then $N$ is orchard if and only if it admits an HGT-consistent labeling.
\end{theorem}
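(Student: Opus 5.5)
We prove both directions by induction on the number of cherry-picking reductions. We work directly with the reduction-based definition of orchard networks and do not need Theorem~\ref{thm:acyclic}. A useful freedom throughout is that leaves have no children, so their labels may be raised arbitrarily without violating any condition. We may therefore assume every leaf has a label strictly larger than every non-leaf label.

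\emph{Orchard $\Rightarrow$ HGT-consistent labeling.} We induct on the length of a reducing sequence $S$.
\begin{itemize}
\item \emph{Base case.} If $N$ has one leaf, it is a root arc to a leaf. Label the root $0$ and the leaf $1$.
\item \emph{Inductive step.} Let $(x,y)$ be the first pair of $S$, so $N'=N(x,y)$ is orchard. By induction $N'$ has an HGT-consistent labeling $t'$. We extend $t'$ to $N$ by reinserting the deleted structure.
\item \emph{Cherry reduction.} The arc $wy$ of $N'$ is subdivided by $p_x$, and $x$ is attached to $p_x$. Since $y$ is a leaf, $t'(w)<t'(y)$. Put $t(p_x)$ strictly between $t'(w)$ and $t'(y)$, and set $t(x)=t'(y)$. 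All three conditions hold, and no reticulation is affected.
\item \emph{Reticulation cherry reduction.} The arcs $w_yy$ and $w_xx$ of $N'$ are subdivided by $p_y$ and $p_x$, and the arc $p_yp_x$ is added. First raise the leaf labels as above. Then set $t(p_x)=t(p_y)=c$ with $c>\max(t'(w_x),t'(w_y))$ and $c$ below the leaf labels.
  \begin{itemize}
  \item The arc $p_yp_x$ is horizontal and $w_xp_x$ is vertical, so the new reticulation $p_x$ has exactly one horizontal incoming arc.
  \item $p_y$ has the strictly larger child $y$, and $p_x$ has the strictly larger child $x$.
  \item The old vertices $w_x,w_y$ keep a strictly larger child, because their new children $p_x,p_y$ have label $c$ exceeding their own labels.
  \end{itemize}
\end{itemize}

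\emph{HGT-consistent labeling $\Rightarrow$ orchard.} We induct on $|V(N)|$. Given an HGT-consistent labeling $t$, first raise all leaf labels above every non-leaf label. If $N$ has more than one leaf, let $M$ be the maximum label over non-root, non-leaf vertices.
\begin{itemize}
\item \emph{A tree vertex attains $M$.} Suppose a reticulation $r$ has $t(r)=M$. Its horizontal parent $u$ also has label $M$. The vertex $u$ cannot be a reticulation: a reticulation has a single child, which must be strictly larger, but $r$ has the same label. Hence $u$ is a tree vertex with label $M$. So some tree vertex $u$ has $t(u)=M$.
\item \emph{The children of $u$ form a reducible pair.} By maximality of $M$, each child of $u$ is either a leaf or has label exactly $M$. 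A child with label $M$ is joined to $u$ by a horizontal arc, so it is a reticulation $r$. The child of $r$ is strictly larger than $M$, hence a leaf $x$. At most one child of $u$ is such a reticulation, since $u$ needs a strictly larger child. Therefore the children of $u$ give either a cherry $(x,y)$ or a reticulation cherry $(x,y)$ with $p_x=r$ and $p_y=u$.
\item \emph{Reduce and restrict.} Apply the reduction and restrict $t$ to the surviving vertices. The resulting $N(x,y)$ has fewer vertices, so it is orchard by induction, and prepending $(x,y)$ gives a reducing sequence for $N$.
\end{itemize}

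The main obstacle is checking that the restricted labeling is still HGT-consistent after the reduction and the suppressions; this is where the hypotheses are really used.
\begin{itemize}
\item \emph{Monotonicity along merged arcs.} When a suppressed vertex's incident arcs are merged, $t(\text{tail})\le t(\text{head})$ follows by transitivity.
\item \emph{Strict arcs into non-reticulations.} Every merged arc ends at a leaf ($x$ or $y$), so it is strict automatically.
\item \emph{Reticulations keep one horizontal in-arc.} In the reticulation-cherry case, the deleted arc $p_yp_x$ was the unique horizontal in-arc of $p_x$. Hence $p_x$'s other in-arc is vertical and $p_x$ is simply suppressed. No other reticulation loses or gains a horizontal in-arc.
\item \emph{Each former parent keeps a strictly larger child.} The parents $w_x,w_y$ of the suppressed vertices now point to leaves, which have larger labels.
\item \emph{Root.} If the root's child is suppressed, the root arc now ends at a leaf, and the same check applies.
\end{itemize}
This case check, together with the argument that a tree vertex attains the maximum label $M$, is where care is needed. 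The rest is routine.
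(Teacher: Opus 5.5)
The paper gives no argument of its own here. The statement is imported as Theorem~1 of \cite{vaniersel2022}, and only the direction ``HGT-consistent labeling $\Rightarrow$ orchard'' is actually used (in Lemma~\ref{lem:N_GO}). Your proof is correct and self-contained. In the forward direction you reinsert a cherry or reticulation cherry with fresh labels. In the converse you find a reducible pair below a tree vertex whose label is maximum among non-root, non-leaf vertices. Compared with the citation, your route checks the equivalence against this paper's own restated labeling conditions and reduction operations. Pushing all leaf labels above every non-leaf label also makes the post-suppression checks essentially automatic.

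Two loose ends should be tied.

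First, ``reduces $N$ to a network with a single leaf'' must be read as ``reduces $N$ to the network consisting of a root arc into a leaf''. Your sentence ``if $N$ has one leaf, it is a root arc to a leaf'' is false in general. For instance, $\rho\to u_1$, $u_1\to u_2$, $u_1\to r_1$, $u_2\to r_1$, $u_2\to r_2$, $r_1\to r_2$, $r_2\to\ell$ is a one-leaf network. It has no HGT-consistent labeling: $t(u_2)\le t(r_1)<t(r_2)$ makes both in-arcs of $r_2$ vertical. So under the literal reading the theorem fails. With the intended reading, your converse must also cover a one-leaf $N$, which you only treat implicitly. Run your maximum-label argument whenever $N$ has any non-root, non-leaf vertex. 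It yields a cherry or reticulation cherry, hence two distinct leaves, so a one-leaf network with an HGT-consistent labeling is trivial.

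Second, when you pass from a reticulation $r$ with $t(r)=M$ to its horizontal parent $u$, you should also rule out $u=\rho$. This holds because the root's only child is never a reticulation. Its other parent would be reachable from $\rho$ only through that child, which creates a directed cycle.
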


\subsection{Tree-based Networks}\label{sub:TBN}

Let $N$ be a network. We say that $N$ is a \emph{tree-based network} if there exists a phylogenetic tree $T$ (called the \emph{base tree}) and a subdivision $T'$ of $T$ such that $T'$ is a spanning tree of $N$~\cite{francis2015}.

\begin{lemma}[\cite{hayamizu2021}] \label{lem:cherry}
Let $N$ be a network. Then $N$ is tree-based if and only if it has a cherry cover.
\end{lemma}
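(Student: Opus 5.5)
The plan is to pass through the standard reformulation of tree-basedness in terms of arc deletions, and then translate directly between such deletion sets and cherry covers.

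\emph{Reformulation.} $N$ is tree-based if and only if there is a set $D$ of reticulate arcs with two properties. First, $D$ contains exactly one incoming arc of every reticulation. Second, after deleting $D$, no non-leaf vertex becomes a sink; since $D$ never contains an arc with a reticulation as its tail (that tail would lose its unique out-arc), it suffices that no tree vertex loses both of its out-arcs. Indeed, $N-D$ is then a spanning subgraph in which every non-root vertex has indegree $1$. Hence it is a spanning tree rooted at the root, since every vertex is reached by walking up parent arcs to the root. Its leaves are exactly $X$, and suppressing its degree-$(1,1)$ vertices gives a phylogenetic tree. Conversely, a spanning subdivided base tree determines such a set $D$, namely the arcs of $N$ not in the tree.

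\emph{Cherry cover $\Rightarrow$ tree-based.} Let $P$ be a cherry cover. First, locate the out-arcs of each vertex in $P$.
\begin{itemize}
\item For a reticulation $r$, its unique out-arc cannot lie in a cherry shape, because the internal vertex of a cherry shape uses two out-arcs. It also cannot be the arc $p_y y$ or $p_y p_x$ of a reticulation cherry shape, since $p_y$ has outdegree $2$ there. Hence it is the arc $p_x x$ of a unique shape $R_r$ with $p_x=r$.
\item For a tree vertex $p$, its two out-arcs cannot be of the form $p_x x$, since $p_x$ is a reticulation. So they are covered by a single shape in which $p$ is the internal vertex: either a cherry shape, or an $R$ with $p=p_y$.
\end{itemize}
Let $D$ consist of the middle arcs $p_y p_x$ of all reticulation cherry shapes. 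By the above, $D$ contains exactly one in-arc of each reticulation. Each tree vertex $p_y$ keeps its arc $p_y y$, because arcs are covered exactly once and only middle arcs are deleted. Cherry internal vertices lose nothing. By the reformulation, $N$ is tree-based.

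\emph{Tree-based $\Rightarrow$ cherry cover.} Take $D$ as in the reformulation. Build the shapes as follows.
\begin{itemize}
\item For each tree vertex $p$ with both out-arcs outside $D$, take the cherry shape on $p$ and its two children.
\item For each tree vertex $p$ with an out-arc $pr\in D$ and other child $y$, let $x$ be the child of $r$. Take the reticulation cherry shape with arcs $\{pr, py, rx\}$; its endpoints need not be leaves, as the definition allows.
\end{itemize}
Now check that every non-root arc is covered exactly once.
\begin{itemize}
\item Out-arcs of tree vertices are covered by the shape of their tail.
\item The out-arc of a reticulation $r$ is covered by the unique shape built from the unique arc of $D$ entering $r$.
\item Every non-root arc has one of these two kinds of tail.
\end{itemize}
The one edge case is a root arc pointing to a reticulation, or a reticulate arc whose tail is a reticulation. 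Such an arc is never in $D$ and is covered as the out-arc of its tail, so the count stays consistent.

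\emph{Main obstacle.} I expect the delicate step to be the careful bookkeeping that each arc is covered exactly once. In particular, one must treat reticulations whose parent is itself a reticulation, and make sure such arcs are never chosen in $D$. The rest is direct.
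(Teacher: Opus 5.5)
Your proof is correct, and it takes a different route from the paper. The paper does not prove this lemma; it imports it by citation from the same source as the zig-zag machinery of Lemmas~\ref{lem:zig-zag} and~\ref{lem:Wfence}. A proof inside that framework would presumably argue fence by fence, showing that a cherry cover exists exactly when there is no $W$-fence. You avoid that machinery and work directly with the base tree. Your key observation is that the arcs outside a spanning subdivided base tree are exactly the middle arcs $p_yp_x$ of the reticulation cherry shapes.

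This is elementary and self-contained, and it proves slightly more. Each tree vertex is the top vertex of exactly one shape (the internal vertex of a cherry shape, or $p_y$ of a reticulation cherry shape), and every arc of $D$ has a tree-vertex tail. Hence your two constructions are mutually inverse, so cherry covers of $N$ are in bijection with spanning subdivisions of base trees. The zig-zag viewpoint, by contrast, is the natural setting for the finer structural facts the paper uses later, such as the forced shapes on $N$-fences in Lemma~\ref{lem:N-fence}.

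Two details need tightening.
\begin{itemize}
\item When you form $\{pr,py,rx\}$, check that $x\neq y$, so the shape really has four vertices. If $x=y$, then $x$ is a reticulation with in-arcs $px$ and $rx$, and your conditions on $D$ rule out both: $rx\in D$ makes $r$ a sink, and $px\in D$ together with $pr\in D$ makes $p$ a sink.
\item The ``edge case'' of a root arc entering a reticulation cannot occur. Every vertex is reachable from the root through its single out-arc, so the other parent of that reticulation would be a descendant of it, creating a directed cycle. Also, the root arc is simply exempt from the covering condition; it is not ``covered as the out-arc of its tail'', since no shape is built at the root.
\end{itemize}
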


\begin{figure}[htbp]
 \centering
 \includegraphics[width=0.5\linewidth]{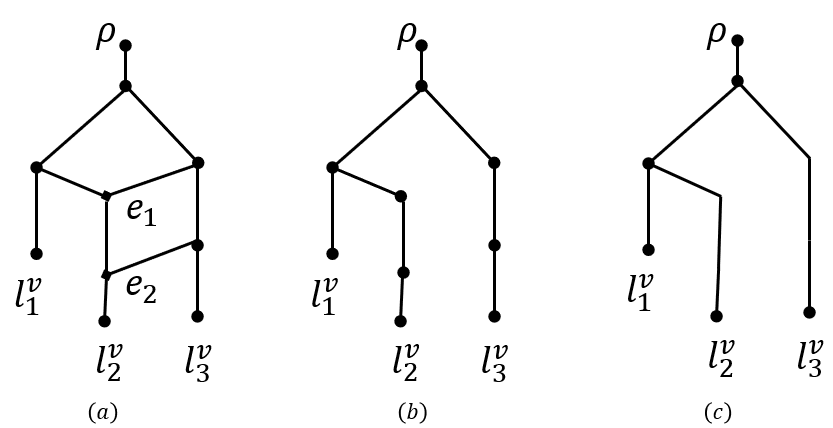}
 \caption{(a) Tree-based Phylogenetic network $N_4$; (b) A subdivision tree $\tilde{T}_4$ of $N_4$; (c) A base tree $T_4$ of $N_4$.}
 \label{fig:TBN}
\end{figure}

To characterize the structural features of tree-based networks, a method based on arc decomposition---called the \emph{maximal zig-zag trail decomposition}---was proposed in~\cite{hayamizu2021}.

Specifically, a sequence of arcs \((a_1, a_2, \ldots, a_k)\) with \(k > 1\) is called a \emph{zig-zag trail} if for every \(i \in [1, k-1]\), it holds that either \(\text{tail}(a_i) = \text{tail}(a_{i+1})\) or \(\text{head}(a_i) = \text{head}(a_{i+1})\). A zig-zag trail is \emph{maximal} if it is not properly contained in any longer zig-zag trail of the same type.
This decomposition allows us to analyze the network structure by dividing its arcs into maximal zig-zag trails.
It is known that the class of orchard networks is strictly contained within the class of tree-based networks (Corollary 1 of~\cite{vaniersel2022}). That is, every orchard network is tree-based, but not every tree-based network is orchard.
\textcolor{black}{The essential difference lies in the type of linking arcs used: orchard networks require the linking arcs to be horizontal under a non-temporal labeling, while tree-based networks allow more general linking arcs~\cite{vaniersel2022}.}

Based on the endpoints and structural properties, maximal zig-zag trails can be classified into the following four types:

\begin{itemize}
 \vspace{-1ex}
 \item \emph{Crown}: the length $k \geq 4$ is even, and either $\text{tail}(a_1) = \text{tail}(a_k)$ or $\text{head}(a_1) = \text{head}(a_k)$;
 \vspace{-1ex}
 \item \emph{M-fence}: the length $k \geq 2$ is even, not a crown, and for all $i \in [k]$, $\text{tail}(a_i)$ is a tree vertex;
 \vspace{-1ex}
 \item \emph{N-fence}: the length $k \geq 1$ is odd, with exactly one of $\text{tail}(a_1)$ or $\text{tail}(a_k)$ being a reticulation. The trail can be rearranged such that $\text{tail}(a_1)$ is a reticulation and $\text{tail}(a_k)$ is a tree vertex;
 \vspace{-1ex}
 \item \emph{W-fence}: the length $k \geq 2$ is even, with both $\text{tail}(a_1)$ and $\text{tail}(a_k)$ being reticulations.
\end{itemize}

A set $S$ of maximal zig-zag trails is called a \emph{zig-zag decomposition} of network $N$ if the trails in $S$ partition all arcs of $N$ except the root arc.

\begin{lemma}[Theorem 4.2 in \cite{hayamizu2021}] \label{lem:zig-zag}
Let $N$ be a network. Then $N$ admits a unique zig-zag decomposition.
\end{lemma}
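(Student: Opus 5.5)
The plan is to recast the statement as a claim about the connected components of an auxiliary graph on the arcs. Let $E' = E(N)$ minus the root arc. Define the \emph{arc-adjacency graph} $L$ with vertex set $E'$, where two distinct arcs $a,b$ are joined if $\text{tail}(a)=\text{tail}(b)$ or $\text{head}(a)=\text{head}(b)$. By definition, a zig-zag trail is a walk in $L$ that alternates between the two kinds of adjacency. The first step shows it is in fact a walk in $L$ with no repeated arcs.

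The key structural observation uses the degree constraints of a binary network. Every vertex has outdegree at most $2$ and indegree at most $2$. Hence an arc $a$ has at most one \emph{tail-neighbour}: the sibling arc, present exactly when $\text{tail}(a)$ is a tree vertex. A reticulation has outdegree $1$, and the root is excluded because its single arc is the root arc. Likewise $a$ has at most one \emph{head-neighbour}: the other incoming arc, present exactly when $\text{head}(a)$ is a reticulation. So $L$ has maximum degree $2$, and along any path or cycle of $L$ the two kinds of adjacency automatically alternate. If two consecutive steps were of the same kind, two arcs would have to share a tail with a third arc, forcing outdegree $3$; the case of a shared head is symmetric. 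Consequently every connected component of $L$ is either a path or a cycle. Reading it in order gives a zig-zag trail, and every zig-zag trail lies inside a single component.

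Next I would establish existence and uniqueness together. Any zig-zag trail is contained in one component $K$. A trail that is maximal, meaning not properly contained in a longer one, must traverse all of $K$: otherwise, since $K$ is a path or cycle, we could extend it by the next arc of $K$, and alternation is automatic by the previous paragraph. Hence the maximal zig-zag trails are exactly the components of $L$. The components partition $E'$, so they form a zig-zag decomposition, which gives existence. Conversely, any zig-zag decomposition consists of maximal trails, each equal to a whole component, and these must cover $E'$. Therefore the decomposition is precisely the set of components of $L$, which gives uniqueness.

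Finally I would check that each component is one of the four listed types. A cycle component has even length by alternation. Its length is at least $4$, since a $2$-cycle would require two arcs sharing both tail and head, which is impossible in a binary network. It closes up with a shared tail or head, so it is a crown. For a path component, consider an endpoint arc $a_1$. It lacks the neighbour of the kind not used at the first step. If it lacks a head-neighbour, its head is not a reticulation, and nothing more is needed. If it lacks a tail-neighbour, then $\text{tail}(a_1)$ is a reticulation, since a tree vertex would supply a sibling and the root is excluded. The parity of the length then determines which kind of neighbour is missing at each end, which yields the M-, N- and W-fence cases; a single arc with reticulation tail and non-reticulation head is the length-$1$ N-fence. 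I expect the main obstacle to be this last bookkeeping. One must check carefully that the parity and endpoint conditions in the four definitions match exactly the possible endpoint configurations, including degenerate short trails, and do not merely rely on the component structure.
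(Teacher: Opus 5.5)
Your proof is correct. Note that the paper gives no argument for this lemma of its own: it imports it as Theorem~4.2 of Hayamizu, so your proposal is a self-contained replacement rather than a variant of anything in the paper.

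Your route works as follows. You encode ``shares a tail'' and ``shares a head'' as the edges of a graph $L$ on the non-root arcs. Binarity gives every arc at most one neighbour of each kind, so $L$ is a disjoint union of paths and cycles whose edge types alternate. You then identify maximal zig-zag trails with components of $L$. This is clean, and its advantage over the bare citation is that it exposes the conventions under which the statement is true. Several of these are stated slightly wrongly in the paper's own definitions.

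Distinctness of the arcs in a trail has to be part of the definition; your ``first step'' cannot establish it. In a cycle component an alternating walk can wind around indefinitely, so without distinctness no trail there would be maximal. (You do use distinctness correctly to get alternation for free.)

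Trails of length $1$ must be admitted, despite the ``$k>1$'' in the paper's definition. Descending through reticulation children from any reticulation, one reaches an arc from a reticulation to a non-reticulation, and that arc is an isolated vertex of $L$. Under a literal reading, no network with a reticulation would have a zig-zag decomposition.

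Uniqueness holds for the partition of the arcs, with a trail identified with its reversal and a crown with its cyclic rotations. That is exactly what your component argument proves.

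Two small checks are worth adding. First, ruling out a tail-neighbour that coincides with a head-neighbour (your ``$2$-cycle'') uses that $N$ has no parallel arcs, not binarity: a tree vertex with both out-arcs into one reticulation satisfies all the degree conditions. Second, the root arc is isolated in the full arc-adjacency graph, because the root's child cannot be a reticulation; its other parent would be a descendant of that child, giving a directed cycle. So discarding the root arc changes no other component.

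Finally, the classification into crowns and the three fence types is not part of this lemma as stated. The parity bookkeeping you flag as the main obstacle is needed only for the W-fence characterization used later, not for existence and uniqueness.
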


\begin{lemma}[Corollary 4.6 in \cite{hayamizu2021}] \label{lem:Wfence}
Let $N$ be a network. Then $N$ is tree-based if and only if its zig-zag decomposition contains no W-fence.
\end{lemma}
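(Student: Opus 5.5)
The statement just given is Lemma~\ref{lem:Wfence}, which characterises tree-based networks by the absence of W-fences in the zig-zag decomposition. I would prove it through the cherry-cover criterion of Lemma~\ref{lem:cherry} and the uniqueness of the decomposition from Lemma~\ref{lem:zig-zag}. The plan is to show two things: a cherry cover exists exactly when each maximal zig-zag trail can be covered on its own, and this local covering fails only for W-fences.

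\textbf{Step 1: shapes lie inside trails.} Any cherry shape or reticulation cherry shape consists of two or three arcs that are consecutive in the zig-zag sense. In a cherry shape the two arcs share a tail; in a reticulation cherry shape $p_yp_x$ and $p_xx$ meet at $p_x$. So the arcs of each shape lie in a single maximal trail. By Lemma~\ref{lem:zig-zag} the trails partition the arcs, hence a cherry cover of $N$ is the same as a choice, for each trail independently, of a partition of its arcs into shapes. The reticulation cherry shape needs care, because $p_xx$ and $p_yp_x$ are not zig-zag adjacent in the head/tail sense. I would therefore first check whether the paper's convention intends the covering unit to be $\{p_yp_x, p_yy\}$, which share the tail $p_y$, with $p_xx$ attached. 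The cleanest route is to use the tree-based definition directly: choose, at each reticulation, the incoming arc that lies in the spanning tree.

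\textbf{Step 2: covering each trail type.} Using the spanning-tree formulation, $N$ is tree-based exactly when we can choose one incoming arc per reticulation such that every tree vertex keeps at least one of its outgoing arcs chosen. Here an arc into a tree vertex or leaf is always chosen. This condition decomposes along the trails, since each reticulation and each tree vertex has all its relevant arcs inside one trail. Write the trail as an alternating path of tails and heads.
\begin{itemize}
\item Crown: take every other arc.
\item M-fence: the end arcs have tree-vertex tails. Choose the arcs so that each reticulation head receives exactly one and each tree tail keeps one; an alternating choice works with $k/2$ heads and $k/2+1$ tails.
\item N-fence: one end tail is a reticulation, whose out-arc is forced into the tree. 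Alternate from that end.
\end{itemize}
A W-fence has tails $t_0,\dots,t_m$ and heads $h_1,\dots,h_m$, where $h_i$ is a reticulation with parents $t_{i-1}$ and $t_i$. Its two end tails are reticulations, so their out-arcs are forced. Counting, the $m$ heads must each choose one arc, giving $m$ chosen arcs. The $m-1$ interior tree tails need at least one chosen arc each, and the two forced ends demand two more arcs, for $m+1$ in total. This exceeds $m$, so a W-fence cannot be covered.

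\textbf{Step 3: assembly, and the main obstacle.} Combining the per-trail choices gives a spanning tree exactly when no W-fence occurs. The main obstacle is checking acyclicity and connectivity of the chosen arc set globally. I would argue as follows. Every non-root vertex has in-degree $1$ in the chosen set, because reticulations choose one arc and all other vertices have a single parent. Hence the chosen set is a spanning arborescence from the root, since $N$ is acyclic. Leaves remain the only sinks, because every tree vertex keeps a chosen child and every reticulation keeps its unique child. After suppressing degree-two vertices this yields a subdivision of a phylogenetic tree, completing both directions.
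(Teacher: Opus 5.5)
The paper does not prove this lemma; it imports it as Corollary~4.6 of Hayamizu. Your Steps~2--3 give a correct, self-contained proof in the spirit of Hayamizu's structure theorem.

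You replace tree-basedness by the existence of an arc set $S$ with four properties. It contains every arc into a tree vertex or leaf, the root arc among them. It contains exactly one in-arc of each reticulation and at least one out-arc of each tree vertex. It contains the out-arc of each reticulation.

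Such an $S$ gives every non-root vertex in-degree one, so it is a spanning arborescence by acyclicity. The out-arc conditions make its sinks exactly the leaves. Conversely, the arc set of a support tree satisfies all of these conditions.

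Each constraint is unary or involves two arcs sharing a head or a tail. So each lives inside one maximal trail, and feasibility splits trail by trail. Crowns, M-fences and N-fences are feasible. On a W-fence, the $m$ reticulation heads yield exactly $m$ chosen arcs, but $m+1$ tails each need one.

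Citing the result buys brevity and ties the lemma to the rest of Hayamizu's framework, e.g.\ the uniqueness result the paper also uses. Your argument shows why W-fences, and only they, obstruct tree-basedness. Dropping the cherry-cover route of Lemma~\ref{lem:cherry} was the right decision.

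The write-up needs a few repairs.
\begin{itemize}
\item Step~1 is false as stated and should be deleted, not hedged. The arc $p_xx$ leaves a reticulation, so it is an end arc of its trail and is not zig-zag adjacent to $p_yp_x$. When $x$ is a leaf, $p_xx$ forms a length-one trail on its own, so cherry covers do not split into per-trail covers.
\item Your one-sentence criterion in Step~2 omits the requirement that each reticulation keep its out-arc. You do use this requirement (for the W-fence ends and in Step~3), but it must be stated. Without it a W-fence is satisfiable: let $h_i$ take its arc from $t_i$ for $i<m$.
\item Your remark that each reticulation and each tree vertex has all its relevant arcs inside one trail is not literally true. A reticulation's out-arc generally lies in a different trail from its in-arcs. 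What holds, and suffices, is that each individual constraint lies in one trail.
\item In the M-fence case your counts are reversed. A length-$k$ M-fence has $k/2$ tails and $k/2+1$ heads, and the two end heads are non-reticulations with forced in-arcs. So $k/2+1$ arcs are chosen for $k/2$ tails, which is why the alternating choice succeeds. Having $k/2+1$ tails against $k/2$ heads is the W-fence configuration.
\end{itemize}
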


\section{Arc-Deletion Distance to Orchard Networks}

To investigate the transformation of a general network into an orchard network, we first define a valid arc deletion operation, and based on it, introduce a new proximity measure called the \emph{arc-deletion distance}.

In a network, arc deletions must preserve the network's validity. Therefore, not all arcs are eligible for deletion. We identify the following two types of illegal arc deletions:

\begin{itemize}
 \item \textbf{Incoming arcs to tree vertices cannot be deleted:} such deletions may disconnect the network or produce a vertex with indegree 0 and outdegree 2, which violates the network definition;
 \item \textbf{Outgoing arcs from reticulation vertices cannot be deleted:} deleting such arcs may create structures with indegree 2 and outdegree 0 or break the connectivity of the graph, which are also invalid.
\end{itemize}

An arc uv is deletable if:

\begin{enumerate}
 \item the tail vertex $u$ of the arc satisfies an outdegree equal to 2;
 \item the head vertex $v$ is a reticulation.
\end{enumerate}

After deleting an arc, we apply the following operations:
\begin{itemize}
 \item Suppress all vertices with both indegree and outdegree equal to 1;
 \item Remove all parallel arcs, keeping only one of them.
\end{itemize}

Based on the above rules, we define the proximity measure to orchard networks via arc deletions for a given network $N = (V, A)$ as:

\[
E_{\mathrm{OR}}(N) = \min \left\{ |A'| \,\middle|\, A' \subseteq A,\; N - A' \text{ is an orchard network} \right\}
\]

where $N - A'$ denotes the network obtained by deleting the arc set $A'$ from $N$.

\begin{theorem}\label{main theorem}
Let $N$ be a network. Computing $E_{\mathrm{OR}}(N)$ is NP-hard.
\end{theorem}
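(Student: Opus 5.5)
We prove NP-hardness by a polynomial-time reduction from \textsc{Degree-3 Vertex Cover}. Given a cubic (or max-degree-3) graph $G=(V,E)$ and an integer $k$, we build a network $N_G$ and an integer $k'$ such that $G$ has a vertex cover of size at most $k$ if and only if $N_G$ can be made orchard by deleting at most $k'$ reticulate arcs. We reuse the gadget framework of \cite{vaniersel2023}.

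For each vertex $v\in V$ we create a \emph{vertex gadget}. It contains one designated deletable reticulate arc $e_v$, together with three \emph{ports}, one for each incident edge. For each edge $uv\in E$ we create an \emph{edge gadget} that links a port of $u$ to a port of $v$. The linking is arranged so that, in the resulting zig-zag decomposition (Lemma~\ref{lem:zig-zag}), each edge gadget sits on a structure that is not orchard: a crown or W-fence pattern forcing a cyclic cherry cover. This obstruction is destroyed exactly when the designated arc of at least one endpoint is deleted. All gadgets hang from a caterpillar backbone so that the whole is a valid binary network, of size linear in $|V|+|E|$. We then set $k'=k$, or $k'=k+c|E|$ if each edge gadget additionally requires a fixed number of forced deletions.

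\textbf{Forward direction.} Given a vertex cover $C$, we delete $e_v$ for all $v\in C$. We then show that $N_G-\{e_v: v\in C\}$ is orchard by exhibiting an HGT-consistent labeling (Theorem~\ref{thm:HGT}). Labels are assigned gadget by gadget: a base tree from the backbone, with deleted or suppressed structure making each remaining reticulation have exactly one horizontal incoming arc. Alternatively, we can give an explicit acyclic cherry cover (Theorem~\ref{thm:acyclic}) in which every edge gadget is oriented away from a covered endpoint.

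\textbf{Backward direction.} Suppose a set $A'$ with $|A'|\le k'$ makes the network orchard. We first normalize $A'$ with an exchange argument: any deleted arc inside a vertex gadget, or in an edge gadget adjacent to $u$, can be replaced by $e_u$ without increasing $|A'|$ and while preserving orchardness. The replacement is justified because deleting $e_u$ "frees" all three ports of $u$. After normalization, the vertices $v$ with $e_v\in A'$ form a set $C$ of size at most $k$. If some edge $uv$ had neither endpoint in $C$, its edge gadget would remain intact. We then argue via the zig-zag decomposition that this gadget still contains a W-fence, so the network is not even tree-based (Lemma~\ref{lem:Wfence}), or a forced cyclic cherry cover. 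Either way it is not orchard, a contradiction.

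\textbf{Main obstacle.} The main obstacle is this backward direction, and in particular the exchange lemma. Deletions trigger suppression of degree-2 vertices and removal of parallel arcs, so the structure after deletion can change non-locally. We must therefore verify that no cheap "cross-gadget" deletion can repair two edge gadgets at once, and that the obstruction in an uncovered edge gadget survives arbitrary deletions elsewhere. We would first try to isolate each gadget with tree-vertex "buffers", so that a deletion's effect on the zig-zag decomposition stays confined to one gadget. We would then do a finite case analysis of the possible deletions inside a single gadget.
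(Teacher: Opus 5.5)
Your outline matches the paper's: reduce from \textsc{Degree-3 Vertex Cover}, delete one designated arc per cover vertex, certify orchardness with an HGT-consistent labeling, and extract a cover from any solution. But you omit the part that carries the proof, a concrete gadget with an obstruction lemma, and the obstruction you sketch cannot work.

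The obstructions for the three edges at $u$ must be distinct; otherwise deleting $e_v$ would also repair some edge $uw$. Now delete one arc $pq$ and suppress $p,q$. This changes at most three maximal zig-zag trails: the one through $pq$, the one through the arc entering $p$, and the one through the arc leaving $q$ (barring cascades caused by parallel arcs). Every arc of a crown or W-fence enters a reticulation, and $p$ is a tree vertex, so the second trail is neither. Hence $e_u$ destroys at most two edge-local crowns or W-fences, not three.

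The paper avoids this because the obstruction for $uv$ is not edge-local. It is the pair of principal $N$-fences of $\operatorname{Gad}(u)$ and $\operatorname{Gad}(v)$, joined by the crossing arcs $r^u_{\tau_u(v)}w^v_{\pi_v(u)}$ and $r^v_{\tau_v(u)}w^u_{\pi_u(v)}$. These run from even-indexed heads late in one fence to even-indexed tails early in the other. So Lemma~\ref{lem:path} forces a directed cycle in every cherry-cover auxiliary graph, while $N_G$ itself stays tree-based (Lemma~\ref{lem:3regular}). Deleting $w^v_4r^v_4$ splits $v$'s fence and separates $w^v_1,w^v_2,w^v_3$ from $r^v_5,r^v_6,r^v_7$, which kills all three obstructions at $v$ at once.

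For the same reason, your labeling cannot be built gadget by gadget; it must use the cover on the inter-gadget arcs. In covered gadgets, $w^v_1,w^v_2,w^v_3$ get labels $12,13,14$ and $r^v_5,r^v_6,r^v_7$ get $2,3,4$, while uncovered gadgets use $5,\dots,11$. Every inter-gadget arc is then strictly increasing precisely because each edge has a covered endpoint.

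Your backward direction rests on an exchange lemma that you neither prove nor need. Showing that swapping an arbitrary deleted arc for $e_u$ preserves orchardness is delicate, as you note yourself, and the paper never does it. Instead it uses three facts:
\begin{itemize}
\item In $N_G$ the inter-gadget arcs leave reticulations and enter tree vertices, so they are not deletable. Indeed, every deletable arc of $N_G$ lies in some extended principal part $P_v$.
\item Deletions in other gadgets leave the two fences and the crossing arcs of an edge $uv$ in place. So Lemma~\ref{lem:Gad} (via Lemma~\ref{lem:path} and Theorem~\ref{thm:acyclic}) gives $A\cap(P_u\cup P_v)\neq\emptyset$ for every edge $uv$.
\item The sets $P_v$ are pairwise disjoint, so $V_A=\{v: A\cap P_v\neq\emptyset\}$ is a vertex cover with $|V_A|\le|A|\le k$.
\end{itemize}
The backward direction needs nothing more; you never have to show that a modified deletion set still yields an orchard network. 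Replacing your exchange lemma and case analysis with Lemma~\ref{lem:path} plus this charging argument removes the ``main obstacle'' you identify.
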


The remainder of this paper is devoted to the proof of Theorem \ref{main theorem}.

\section{The Reduction}\label{sec:reduction}

In this section, we present a polynomial-time reduction from the \textsc{Degree-3 Vertex Cover} problem to the \textsc{$E_{\mathrm{OR}}$-Distance} problem.

\vspace{8pt}
\fbox{
\parbox{0.9\linewidth}{
\textbf{DEGREE-3 VERTEX COVER (DECISION)} \\
\textbf{Input:} A 3-regular graph $G = (V, E)$ and a natural number $k$. \\
\textbf{Question:} Does there exist a vertex cover $V' \subseteq V$ such that $|V'| \leq k$?
}
}
\vspace{10pt}

\fbox{
\parbox{0.9\linewidth}{
\textbf{\textsc{$E_{\mathrm{OR}}$-DISTANCE (DECISION)}} \\
\textbf{Input:} A phylogenetic network $N$ and a natural number $k$. \\
\textbf{Question:} Does there exist a subset of arcs $A' \subseteq A(N)$ with $|A'| \leq k$ such that $N - A'$ is an orchard network?
}
}
\vspace{10pt}

\begin{figure}[htbp]
 \centering
 \includegraphics[width=0.65\linewidth]{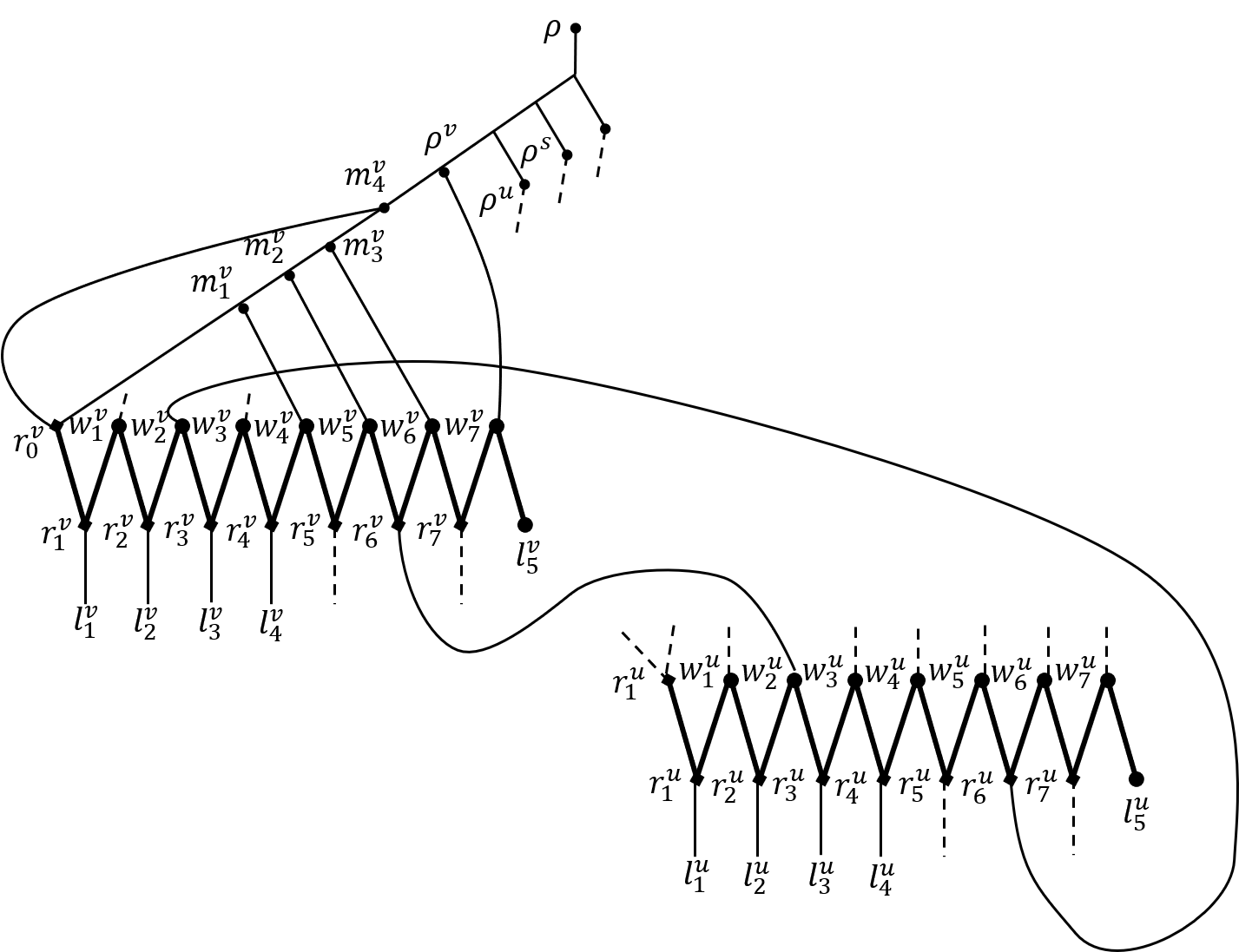}
 \caption{The two gadgets corresponding to an edge $uv\in E(G)$ and the inter-gadget arcs between them, following the construction of \cite{vaniersel2023}.}
 \label{fig:phy_net}
\end{figure}

We reduce from the \textsc{Degree-3 Vertex Cover} problem.
Let $(G,k)$ be an instance, where $G=(V(G),E(G))$ is a 3-regular graph.
We construct a phylogenetic network $N_G$ such that deleting at most $k$
reticulate arcs makes $N_G$ orchard if and only if $G$ has a vertex cover of size at most $k$.
For each vertex $v\in V(G)$, we construct a gadget $\operatorname{Gad}(v)$.

\medskip
\noindent
\textbf{The principal part.}

The principal part of $\operatorname{Gad}(v)$ is an $N$-fence of length $15$.
Note that 
it consists of the arcs
$\{r^v_0 r^v_1, w^v_i r^v_i, w^v_i r^v_{i+1} (1\le i\le 6),
 w^v_7 r^v_7, w^v_7 l^v_5\}$.
For each $i\in[4]$, the reticulation $r^v_i$ has leaf child $l^v_i$.

\medskip
\noindent
\textbf{The attachment structure.}

Above the principal part, we add tree vertices
$m^v_1,m^v_2,m^v_3,m^v_4,\rho^v$.
The arcs are
$m^v_1 r^v_0, m^v_1 w^v_4, $ 
$m^v_2 m^v_1, m^v_2 w^v_5,$ 
$m^v_3 m^v_2, m^v_3 w^v_6,$
$m^v_4 m^v_3, m^v_4 r^v_0$
and $\rho^v m^v_4, \rho^v w^v_7.$
We call the union of the principal part with the arcs
$m^v_1r^v_0$ and $m^v_4r^v_0$ the extended principal part of
$\operatorname{Gad}(v)$.

\medskip
\noindent
\textbf{Connecting the gadgets.}

Let $V(G)=\{v_1,v_2,\ldots,v_g\}$. We add a global root $\rho$ and tree vertices $s_1,s_2,\ldots,s_{g-1}$. Add arcs
$\rho s_1, s_i s_{i+1}(1\le i\le g-2), s_{g-1}\rho^{v_g}, $
and, for each $i\in[g-1]$, add the arc
$s_i\rho^{v_i}.$
Thus all gadgets are placed below the common root $\rho$.

\medskip
\noindent
\textbf{Inter-gadget arcs.}

For every vertex $v\in V(G)$, let $N_G^{\mathrm{graph}}(v)$ denote the set of neighbours of $v$ in the graph $G$. Fix bijections
\[
\pi_v:N_G^{\mathrm{graph}}(v)\rightarrow \{1,2,3\}
\quad\text{and}\quad
\tau_v:N_G^{\mathrm{graph}}(v)\rightarrow \{5,6,7\}.
\]
For each edge $uv\in E(G)$, add the two arcs
$r^u_{\tau_u(v)}w^v_{\pi_v(u)}$
and
$r^v_{\tau_v(u)}w^u_{\pi_u(v)}.$
These arcs are outgoing from reticulations and enter tree vertices. Hence,
by the definition of deletable arcs, they are not deletable.
Finally, the leaves of $N_G$ are
$\{l^v_i:v\in V(G),i\in[5]\}.$
This completes the construction of $N_G$.

\section{Proof of Theorem~\ref{main theorem}}\label{sec:proof-main}

The goal of this section is to prove Theorem~\ref{main theorem}.
To this end, we establish a sequence of structural lemmas about the network $N_G$.

Let $N_G$ be the directed graph constructed by the above reduction. It is easy to verify that $N_G$ is a directed acyclic graph that satisfies the definition of a phylogenetic network. Its unique root is $\rho$, and its leaf set is $\{l^{v}_i : i \in [5], v \in V(G)\}$. Furthermore, there exists a maximal zig-zag path decomposition of $A(N_G)$ such that all arcs in $N_G$ can be partitioned into $M$-fences and $N$-fences. Using the structural properties of zig-zag decompositions (Lemma~\ref{lem:Wfence}), we obtain the following lemma:

\begin{lemma}[Observation 12 of \cite{vaniersel2023}] \label{lem:3regular}
Let $G$ be a 3-regular graph, and let $N_G$ be the network constructed by the above reduction. Then $N_G$ is tree-based.
\end{lemma}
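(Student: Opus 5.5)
By Lemma~\ref{lem:Wfence}, $N_G$ is tree-based if and only if its zig-zag decomposition contains no W-fence. Since the decomposition is unique (Lemma~\ref{lem:zig-zag}), my plan is to write down an explicit partition of $A(N_G)\setminus\{\rho s_1\}$ into maximal zig-zag trails. I will then check that each trail is a crown, an M-fence or an N-fence, and never a W-fence. An equivalent and more constructive route, via Lemma~\ref{lem:cherry}, is to exhibit a cherry cover. I would carry out the zig-zag check, because maximality and type can be read off locally from the in- and out-arcs of each vertex.

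The first step is to list the arcs around every reticulation. Inside $\operatorname{Gad}(v)$ the reticulations are $r^v_0,\dots,r^v_7$.
\begin{itemize}
\item $r^v_0$ has parents $m^v_1, m^v_4$.
\item $r^v_1$ has parents $r^v_0$ and $w^v_1$.
\item $r^v_i$, for $2\le i\le 7$, has parents $w^v_{i-1}, w^v_i$.
\end{itemize}
Every $w^v_i$ with $i\in\{1,2,3\}$ has its in-arc coming from a reticulation $r^u_{\tau_u(v)}$ of a neighbouring gadget. Every $w^v_i$ with $i\in\{4,5,6,7\}$ has its in-arc from $m^v_1,m^v_2,m^v_3,\rho^v$ respectively.

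The second step is to trace the trail through the principal part. Its arcs $r^v_0r^v_1, w^v_1r^v_1, w^v_1r^v_2,\dots,w^v_7r^v_7,w^v_7l^v_5$ are consecutive in the zig-zag sense. The trail is maximal: $r^v_0$ has out-degree one, so nothing extends past $r^v_0r^v_1$, and $l^v_5$ is a leaf whose only in-arc is $w^v_7l^v_5$. It has odd length $15$, its first tail $r^v_0$ is a reticulation and its last tail $w^v_7$ is a tree vertex, so it is an N-fence, as the construction says. Next, $m^v_1r^v_0,m^v_4r^v_0$ together with the other out-arcs of $m^v_1$ and $m^v_4$ form the trail $(m^v_1w^v_4, m^v_1r^v_0, m^v_4r^v_0, m^v_4m^v_3)$. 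The arcs $m^v_1w^v_4$ and $m^v_4m^v_3$ enter tree vertices, so the trail stops at both ends. It has all tails tree vertices and even length, so it is an M-fence. The remaining arcs are handled the same way.
\begin{itemize}
\item The arcs out of $m^v_2,m^v_3,\rho^v,s_i,\rho$ and the leaf arcs $r^v_il^v_i$ all enter tree vertices or leaves. Each gives an M-fence of length $2$ (a tree vertex with its two out-arcs) or, for an out-arc of a reticulation, an N-fence of length $1$.
\item Each inter-gadget arc $r^u_{\tau_u(v)}w^v_{\pi_v(u)}$ has a reticulation as tail and a tree vertex as head. It is therefore a length-$1$ N-fence.
\end{itemize}

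The last step is to confirm that these trails partition all non-root arcs, which is done by counting arcs per vertex. No W-fence appears, because every trail containing a reticulation-tailed arc has that arc as its only reticulation-tailed end. Applying Lemma~\ref{lem:Wfence} then gives that $N_G$ is tree-based. The main obstacle is this bookkeeping: making sure no trail accidentally links two reticulation tails. The risk lies at $r^v_5,r^v_6,r^v_7$, whose single out-arc is an inter-gadget arc rather than a leaf arc. I would settle it by noting that a reticulation has out-degree one, so an arc whose tail is a reticulation can only continue a zig-zag through its head, and here that head is a tree vertex of in-degree one.
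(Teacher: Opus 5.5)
Your proposal is correct and takes the same route as the paper, which simply asserts that the zig-zag decomposition of $N_G$ consists only of M-fences and N-fences and then applies Lemma~\ref{lem:Wfence}; your explicit trails supply exactly the verification the paper omits: the length-$15$ principal N-fence, the M-fence $(m^v_1w^v_4, m^v_1r^v_0, m^v_4r^v_0, m^v_4m^v_3)$, the length-$2$ M-fences, and the length-$1$ N-fences on leaf and inter-gadget arcs. The only slip is listing $\rho$ among the tree vertices that contribute a length-$2$ M-fence, since its single out-arc is the root arc $\rho s_1$, which you already exclude from the partition.
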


By Lemma~\ref{lem:3regular}, we confirm that $N_G$ is a tree-based network. Next, we investigate the structural properties of $N_G$, particularly those related to cherry-picking coverings, which will play a crucial role in the subsequent complexity analysis. According to Lemma~\ref{lem:cherry}, since $N_G$ is tree-based, it must admit a cherry-picking covering.

To proceed, we introduce some auxiliary notation and lemmas.
Let $N$ be a network, and let
$\widehat{N}_i=(a^{i}_1,a^{i}_2,\ldots,a^{i}_{k_i})$
For each$j\in \left[1,\frac{k_i-1}{2}\right],$
let $c^{i}_{2j-1}$ denote the child vertex of the arc
$a^{i}_{2j+1}$.
Equivalently,$c^{i}_{2j-1}=\mathrm{head}(a^{i}_{2j+1}).$

Moreover, we use the extended principal part of $\operatorname{Gad}(v)$ as defined in Section~\ref{sec:reduction}, namely the principal $N$-fence together with the arcs $m_1^v r_0^v$ and $m_4^v r_0^v$.

The following lemma describes the relationship between $N$-fences and cherry covers.

\begin{lemma} 
[Lemma 13 of \cite{vaniersel2023}] \label{lem:N-fence}
Let $N$ be a tree-based network, and let
$\widehat{N}_1,\widehat{N}_2,\ldots,\widehat{N}_n$ be the $N$-fences in $N$, each of length at least $3$.
Then every cherry cover of $N$ must contain the reticulated cherry shapes of the form
\[
\{\text{head}(a_{2j-1}^{i})c_{2j-1}^{i},\ a_{2j}^{i},\ a_{2j+1}^{i}\},
\]
for each $i\in[n]$ and $j\in\left[1,(k_i-1)/2\right]$, where $k_i$ is the length of $\widehat{N}_i$.
\end{lemma}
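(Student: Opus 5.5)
We have to prove the statement just given: in a tree-based network, every cherry cover contains the reticulated cherry shapes determined by each $N$-fence of length at least $3$. The plan is to fix one such $N$-fence $\widehat{N}_i=(a_1,\dots,a_k)$, with $k$ odd and rearranged so that $\mathrm{tail}(a_1)$ is a reticulation and $\mathrm{tail}(a_k)$ is a tree vertex. Then fix an arbitrary cherry cover $P$ of $N$ (one exists by Lemma~\ref{lem:cherry}) and argue by induction on $j$, starting from the reticulation end of the fence.

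\textbf{Local structure.} First record the structure of an $N$-fence.
\begin{itemize}
\item The arcs alternate between sharing heads and sharing tails: $\mathrm{head}(a_{2j-1})=\mathrm{head}(a_{2j})$ is a reticulation, and $\mathrm{tail}(a_{2j})=\mathrm{tail}(a_{2j+1})$ is a tree vertex.
\item Each $\mathrm{head}(a_{2j+1})$ is also a reticulation, and it has a unique outgoing arc to $c_{2j-1}$ (the paper's notation in the statement).
\item By maximality of the trail, $\mathrm{tail}(a_1)$ is a reticulation, so $a_1$ is the unique outgoing arc of $\mathrm{tail}(a_1)$.
\end{itemize}

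\textbf{Key observation.} In any cherry cover, a shape covering an arc $uv$ with $v$ a reticulation must be a reticulation cherry shape, since cherry shapes consist of two arcs leaving a common internal vertex. The only arc in such a shape that enters a reticulation is $p_yp_x$, with $p_x$ the reticulation. Consequently, the shape covering an arc $uv$ into a reticulation $v$ is exactly $\{vc,\ uv,\ uy\}$, where $c$ is the child of $v$ and $uy$ is the other outgoing arc of $u$.

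\textbf{Base case and induction.} The arc $a_1$ enters a reticulation, so by the observation it lies in a shape of the form $\{\ldots, a_1, \ldots\}$ with internal vertex $p_y=\mathrm{tail}(a_1)$. But $\mathrm{tail}(a_1)$ is a reticulation with outdegree $1$, so it cannot be $p_y$. Hence the two arcs entering $\mathrm{head}(a_1)=\mathrm{head}(a_2)$ can only be covered as follows: $a_1$ is covered as the arc $p_xx$ of some shape, or as a cherry-shape arc if $\mathrm{head}(a_1)$ is a leaf. To settle this, I would instead use counting at the reticulation $h=\mathrm{head}(a_2)$.

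Each reticulation $h$ receives two incoming arcs, and each must be the middle arc $p_yp_x$ of a distinct reticulation cherry shape whose arc $p_xx$ is the unique out-arc of $h$. That out-arc can be covered only once. Hence exactly one incoming arc of $h$ is a middle arc, and the other must be covered as the arc $p_yy$ of another shape. In the latter case the reticulation $h$ plays the role of the endpoint $y$.

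For $h=\mathrm{head}(a_1)$, the arc $a_1$ cannot be the middle arc, because its tail has outdegree $1$. So $a_2$ is the middle arc, and the shape covering $a_2$ is $\{\mathrm{head}(a_1)c,\ a_2,\ a_3\}$, with $a_3$ the sibling arc at the tree vertex $\mathrm{tail}(a_2)$. Here $c$ is the child of $\mathrm{head}(a_1)$, which is the desired shape for $j=1$, modulo the paper's indexing convention for $c$.

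Now $a_3$ is used as a $p_yy$ arc, so at the reticulation $\mathrm{head}(a_3)=\mathrm{head}(a_4)$ the middle arc must be $a_4$. Its shape is $\{\mathrm{head}(a_3)c',\ a_4,\ a_5\}$. Iterating along the fence gives all shapes for $j$ up to $(k-1)/2$.

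\textbf{Main obstacle.} The main obstacle is making the "exactly one incoming arc of a reticulation is a middle arc" argument airtight, in particular that the other incoming arc is never covered by a cherry shape. A cherry shape $\{px, py\}$ can have a reticulation as endpoint, so this really is the $p_yy$ case. That case is harmless, since all that matters is that the incoming arc is not the middle arc. The second subtle point is reconciling the indexing of $c^i_{2j-1}$ in the statement with the actual child of $\mathrm{head}(a_{2j-1})$; I would interpret $c^i_{2j-1}$ as the child of the reticulation $\mathrm{head}(a_{2j-1})$. The length $\ge 3$ hypothesis guarantees that at least one such step exists.
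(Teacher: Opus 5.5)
Your overall strategy is the right one. You fix an arbitrary cherry cover, walk along the $N$-fence from its reticulation end, and at each reticulation $\mathrm{head}(a_{2j-1})$ force $a_{2j}$ to be the middle arc of a reticulated cherry shape, which then brings in $a_{2j+1}$ as its $p_yy$ arc. The paper gives no proof of its own (the lemma is imported from van Iersel et al.), and this induction is the natural argument. Your reading of $c^i_{2j-1}$ as the child of $\mathrm{head}(a^i_{2j-1})$ is also the only one that makes the displayed set a reticulated cherry shape.

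\textbf{The gap is in the pivotal step.}
\begin{itemize}
\item Your ``key observation'' is false. An arc entering a reticulation need not be a middle arc $p_yp_x$, and a cherry shape may have a reticulation as an endpoint. Your own fence shows this: $a_1$ is covered as a $p_xx$ arc, since it leaves the reticulation $\mathrm{tail}(a_1)$, and $a_3$ is covered as a $p_yy$ arc. Both enter reticulations.
\item The counting paragraph starts from the same false premise (``each must be the middle arc''). From the fact that the out-arc of $h$ is covered once, it can only extract \emph{at most one} middle arc into $h$.
\item The base case needs \emph{at least one}. From ``$a_1$ cannot be the middle arc'' you conclude ``so $a_2$ is''. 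That does not follow unless some in-arc of $h$ is forced to be a middle arc.
\end{itemize}

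\textbf{The missing argument concerns the out-arc of $h$, not its in-arcs.} Let $hc$ be the out-arc of the reticulation $h$. Since $h$ has outdegree $1$, the arc $hc$ cannot lie in a cherry shape, because the two arcs of a cherry shape share a tail. For the same reason it cannot be the arc $p_yp_x$ or $p_yy$ of a reticulated cherry shape, since those share the tail $p_y$. Hence $hc$ is the $p_xx$ arc of a reticulated cherry shape with $p_x=h$, and that shape's middle arc is an in-arc of $h$. Together with exactly-once covering, this gives exactly one middle arc into $h$.

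With this fix, your base case and inductive step go through. At step $j$, the arc $a_{2j+1}$ is already covered by the $j$-th shape, so $a_{2j+2}$ must be the middle arc at $\mathrm{head}(a_{2j+1})$. This applies only while $2j+1<k$. Your blanket claim that every $\mathrm{head}(a_{2j+1})$ is a reticulation fails for $2j+1=k$, but you never use that case.

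The ``main obstacle'' you flag, whether the other in-arc might be covered by a cherry shape, is a non-issue, as you note yourself. The point that actually needs care is the at-least-one direction above.
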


This lemma implies that although $N$ may admit multiple different cherry covers, the shapes of the reticulated cherries used to cover each $N$-fence in $N$ are uniquely determined. Specifically, for every vertex $v \in V(G)$, the principal part of $\text{Gad}(v)$ is an $N$-fence of length 15, denoted as $N = (a_1, a_2, \ldots, a_{15})$. In every cherry cover of $N$, the reticulated cherries covering this $N$-fence follow a unique pattern, denoted as $\mathcal{R}v$, consisting of cherries of the form \{$\text{head}(a_{i-1})$ $c_{i-1}$, $a_i$, $a_{i+1}$\} for even $i \in [2, 14]$.

We now prove Lemma 4, which establishes the relationship between $N$-fences and cherry cover auxiliary graphs.

\begin{lemma}
[Lemma 14 of \cite{vaniersel2023}] \label{lem:path}
Let $N$ be a tree-based network, and suppose that $N$ contains two $N$-fences $\widehat{N}_u := (a_1^u, a_2^u, \ldots, a_{k_{u}}^u)$ and $\widehat{N}_v := (a_1^v, a_2^v, \ldots, a^{v}_{k_{v}})$, each of length at least 3. If there exists a directed path in $N$ from $\text{head}(a_h^u)$ to $\text{tail}(a_i^v)$ and another directed path from $\text{head}(a_j^v)$ to $\text{tail}(a_k^u)$, where $h$, $i$, $j$, and $k$ are even integers satisfying $k < h$ and $i < j$, then every cherry cover auxiliary graph of $N$ contains a cycle.
\end{lemma}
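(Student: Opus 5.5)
We need to prove Lemma~\ref{lem:path}. The plan is to use the forced structure from Lemma~\ref{lem:N-fence} together with the observation that directed paths in $N$ translate into directed paths in the cherry cover auxiliary graph $AG$. Fix an arbitrary cherry cover $P$ of $N$; by Lemma~\ref{lem:cherry} such covers exist, and it suffices to show that its auxiliary graph has a cycle. By Lemma~\ref{lem:N-fence}, for each even index $h$ of $\widehat{N}_u$ the arc $a_h^u$ lies in a forced reticulated cherry shape. This shape is built from $a_{h}^u$, $a_{h+1}^u$ and the arc from $\mathrm{head}(a_{h-1}^u)$. Its internal vertices are $\mathrm{tail}(a_h^u)=\mathrm{tail}(a_{h+1}^u)$ and $\mathrm{head}(a_h^u)$, the reticulation, and its endpoints are $c^u_{h-1}$ and $\mathrm{head}(a_{h+1}^u)$. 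Call this shape $R^u_h$, and define $R^v_i$ similarly.

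\textbf{Step 1 (inside one fence).} Consecutive forced shapes along a fence are linked in $AG$. The endpoint $\mathrm{head}(a_{h+1}^u)$ of $R^u_h$ is the reticulation $\mathrm{head}(a_{h+2}^u)$, which is an internal vertex of $R^u_{h+2}$. Hence $v_{R^u_h}\to v_{R^u_{h+2}}$ is an arc of $AG$. I will fix the exact orientation carefully by reading off which vertices are internal and which are endpoints. This yields monotone directed paths $R^u_{k}\rightsquigarrow R^u_{h}$ whenever $k<h$, and $R^v_i\rightsquigarrow R^v_j$ whenever $i<j$.

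\textbf{Step 2 (along a directed path of $N$).} Let $x\to y$ be an arc of $N$ other than the root arc. It is covered by some shape $B\in P$, with $x$ internal to $B$ and $y$ either internal to $B$ or an endpoint of it. If $y$ is an endpoint of $B$ and $y$ is not a leaf, then the arc leaving $y$ is covered by a shape $B'$ in which $y$ is internal, so $v_B\to v_{B'}$ in $AG$. Chaining this along a directed path from $\mathrm{head}(a_h^u)$ to $\mathrm{tail}(a_i^v)$ gives a directed walk in $AG$ from the shape containing $\mathrm{head}(a_h^u)$ as an internal vertex, namely $R^u_h$, to a shape containing $\mathrm{tail}(a_i^v)$ as an internal vertex, namely $R^v_i$. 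The shapes containing a given vertex as an internal vertex are determined by which arcs leave that vertex. I have to handle that a vertex can be internal to one shape while the outgoing arc continues in another shape; the precise claim is that following the path, each step either stays in the same shape or moves along an arc of $AG$.

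\textbf{Step 3 (closing the cycle).} Symmetrically, $R^v_j\rightsquigarrow R^u_k$. Combining the pieces gives the closed walk
\[
R^u_k\rightsquigarrow R^u_h\rightsquigarrow R^v_i\rightsquigarrow R^v_j\rightsquigarrow R^u_k,
\]
which is nontrivial because $k<h$. A nontrivial closed walk contains a directed cycle, which proves the lemma.

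The main obstacle is Step 2: making rigorous that a directed path in $N$ induces a directed walk in $AG$ between the correct shapes. In particular, the walk must start at the shape where $\mathrm{head}(a_h^u)$ is internal and end at the shape where $\mathrm{tail}(a_i^v)$ is internal. I would first prove a small claim: every non-leaf, non-root vertex is internal to the shape covering each of its out-arcs. That shape is unique when the vertex is a reticulation, and one or two shapes when it is a tree vertex. I would then argue by induction along the path.
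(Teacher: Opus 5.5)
Your proposal is correct. The paper gives no proof of this lemma; it imports it verbatim as Lemma~14 of \cite{vaniersel2023}. Your argument is therefore a self-contained substitute, and it follows the natural route: the forced shapes from Lemma~\ref{lem:N-fence}, the chain $R^u_h\to R^u_{h+2}$ inside each fence, and lifting directed paths of $N$ to walks in $AG$.

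The step you flag as the main obstacle closes exactly as you plan.
\begin{itemize}
\item The tail of every covered arc is internal to its covering shape.
\item If the head $x_{t+1}$ of a path arc is also internal to that arc's shape $B_t$, the arc must be the $p_yp_x$ arc. Then $x_{t+1}$ is a reticulation whose unique out-arc lies in $B_t$, so the walk stays in the same shape.
\item Otherwise $x_{t+1}$ is an endpoint of $B_t$ and internal to the shape covering the next path arc. This gives an arc of $AG$, with the same orientation as your Step~1.
\end{itemize}
The walk starts at $R^u_h$ because the only out-arc of the reticulation $\mathrm{head}(a^u_h)$ is the arc to $c^u_{h-1}$, which $R^u_h$ covers. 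It ends at $R^v_i$ once you append $a^v_i$ to the path. Equivalently, $\mathrm{tail}(a^v_i)$ is a tree vertex, so it must be an endpoint of the last shape.

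Non-triviality of the closed walk does come from Step~1 alone, since $k<h$ forces at least one arc between $R^u_k$ and $R^u_h$.

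One small remark: you correctly take $c^u_{h-1}$ to be the child of the reticulation $\mathrm{head}(a^u_{h-1})$. The paper's ``equivalently, $c^i_{2j-1}=\mathrm{head}(a^i_{2j+1})$'' is a slip and should not be used.
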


This lemma highlights that although $N_G$ may admit multiple different cherry covers, the reticulated cherry shapes used to cover the $N$-fences induce fixed structural relationships in the cherry cover auxiliary graph. In particular, for every edge $uv \in E(G)$, the sets of reticulated cherries that cover the principal parts of $\text{Gad}(u)$ and $\text{Gad}(v)$ collectively form a cycle in the auxiliary graph corresponding to the cherry cover.

\begin{lemma}\label{lem:Gad}
Let $G$ be a 3-regular graph, and let $N_{G}$ be the network constructed above.
Let $A$ be a set of deletable reticulate arcs such that $N_G-A$ is orchard. Then, for every edge $uv\in E(G)$, the set $A$ contains an arc in the extended principal part of $\operatorname{Gad}(u)$ or in the extended principal part of $\operatorname{Gad}(v)$.
\end{lemma}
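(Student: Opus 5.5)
We argue by contraposition. Fix an edge $uv\in E(G)$ and suppose that $A$ contains no arc of the extended principal part of $\operatorname{Gad}(u)$ and none of the extended principal part of $\operatorname{Gad}(v)$. We will show that $N_G-A$ is not orchard. By Theorem~\ref{thm:acyclic}, it suffices to show that every cherry cover of $N_G-A$ has a cyclic auxiliary graph, or that $N_G-A$ has no cherry cover at all.

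\textbf{Step 1: the principal parts survive as $N$-fences.} Since no arc of either extended principal part is deleted, every reticulation $r^u_i$ ($1\le i\le 7$) keeps both incoming arcs. The same holds for $r^u_0$, whose parents $m^u_1,m^u_4$ lie in the extended part. Hence no vertex of the principal $N$-fence is suppressed, except possibly tree vertices $m^u_j$, and these are not in the fence. We then check that the maximal zig-zag trail of $N_G-A$ containing $r^u_0r^u_1$ is still exactly the principal fence $(a_1,\dots,a_{15})$. The tail $r^u_0$ remains a reticulation whose other outgoing structure is unchanged, and the terminal arc $w^u_7l^u_5$ ends at a leaf. The point is that maximality of this trail depends only on arcs incident to $w^u_i,r^u_i$. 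The only such arcs outside the fence are the reticulation out-arcs $r^u_i l^u_i$ and $r^u_{\tau}w^{\cdot}_{\pi}$, and the in-arcs of $w^u_i$. None of these are deletable, and none of them extend the trail. Thus $\widehat N_u$ and $\widehat N_v$ are $N$-fences of length $15\ge 3$ in $N_G-A$.

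\textbf{Step 2: the cross paths survive.} The inter-gadget arcs $r^u_{\tau_u(v)}w^v_{\pi_v(u)}$ and $r^v_{\tau_v(u)}w^u_{\pi_u(v)}$ leave reticulations and are not deletable, so they remain present. Suppression can only shorten paths, not destroy them. We take $\text{head}(a^u_h)=r^u_{\tau_u(v)}$, which sits at an even position $h$ of $\widehat N_u$. In the indexing $a_1=r_0r_1$, $a_{2i}=w_ir_i$, $a_{2i+1}=w_ir_{i+1}$, we get $h=2\tau_u(v)\in\{10,12,14\}$. The arc $w^v_{\pi_v(u)}r^v_{\pi_v(u)}$ has tail at position $i=2\pi_v(u)\in\{2,4,6\}$. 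Symmetrically, we get $j=2\tau_v(u)\in\{10,12,14\}$ and $k=2\pi_u(v)\in\{2,4,6\}$. Then $k<h$ and $i<j$, so Lemma~\ref{lem:path} applies.

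\textbf{Step 3: conclusion.} If $N_G-A$ is not tree-based, it is not orchard, since orchard networks are tree-based. Otherwise Lemma~\ref{lem:path} says every cherry cover auxiliary graph of $N_G-A$ contains a cycle. By Theorem~\ref{thm:acyclic}, $N_G-A$ is then not orchard, which contradicts the hypothesis.

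\textbf{Main obstacle.} The expected difficulty is Step 1: proving rigorously that deletions elsewhere, together with the subsequent suppressions and parallel-arc removals, cannot alter the maximal zig-zag trail forming the principal fence. The worry is a suppressed $m^u_j$ that would merge the arcs $m^u_1r^u_0$ or $m^u_4r^u_0$ into new arcs. We would handle this by noting that such a merge only changes the tail of an in-arc of $r^u_0$, and $r^u_0$ is the head endpoint of the fence. The trail therefore still starts at $r^u_0r^u_1$, because $r^u_0$ has a single out-arc. Possible parallel arcs into $r^u_0$ after a merge would need separate care, and this is where we would check the construction directly.
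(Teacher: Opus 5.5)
Your proposal is correct and follows the paper's proof. Both principal $N$-fences and both non-deletable inter-gadget arcs survive in $N_G-A$, so Lemma~\ref{lem:path} together with Theorem~\ref{thm:acyclic} rules out orchardness; your explicit indices $h=2\tau_u(v)$, $i=2\pi_v(u)$, $j=2\tau_v(u)$, $k=2\pi_u(v)$ and the tree-basedness case split fill in details the paper only asserts.

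The worry in your final paragraph is vacuous. The only deletable out-arcs of the $m^u_j$ are $m^u_1r^u_0$ and $m^u_4r^u_0$, which lie in the extended principal part, so no $m^u_j$ is ever suppressed and no merged or parallel arcs into $r^u_0$ can arise.
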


\begin{proof}
Suppose not. Then there exists an edge $uv\in E(G)$ such that no arc of $A$ lies in the extended principal part of either $\operatorname{Gad}(u)$ or $\operatorname{Gad}(v)$.
Let $\widehat N_u=(a^u_1,a^u_2,\ldots,a^u_{15})
\quad\text{and}\quad \widehat N_v=(a^v_1,a^v_2,\ldots,a^v_{15})$
be the principal $N$-fences in $\operatorname{Gad}(u)$ and
$\operatorname{Gad}(v)$, respectively. Since no arc in the corresponding
extended principal parts is deleted, both $N$-fences remain present in $N_G-A$.
By construction, the edge $uv$ gives two directed inter-gadget arcs:
one from a reticulation in $\operatorname{Gad}(u)$ to a tree vertex in
$\operatorname{Gad}(v)$, and one from a reticulation in
$\operatorname{Gad}(v)$ to a tree vertex in $\operatorname{Gad}(u)$.
These arcs are outgoing from reticulation vertices and hence are not deletable by the definition of deletable arcs. Therefore both directed
connections remain in $N_G-A$.
Consequently, in $N_G-A$ there is a directed path from a later part of $\widehat N_u$ to an earlier part of $\widehat N_v$, and also a directed
path from a later part of $\widehat N_v$ to an earlier part of
$\widehat N_u$. More precisely, these paths satisfy the hypotheses of
Lemma~\ref{lem:path}: there exist even indices $h,i,j,k$ with $k<h$ and $i<j$ such that
there is a directed path from $\operatorname{head}(a^u_h)$ to
$\operatorname{tail}(a^v_i)$ and a directed path from
$\operatorname{head}(a^v_j)$ to $\operatorname{tail}(a^u_k)$. 
By Lemma \ref{lem:path},
every cherry-cover auxiliary graph of $N_G-A$ contains a
directed cycle. Hence $N_G-A$ admits no acyclic cherry cover. By
Theorem \ref{thm:acyclic}, $N_G-A$ is not orchard, contradicting the assumption.
Therefore, for every edge $uv\in E(G)$, at least one arc must be deleted
from the extended principal part of $\operatorname{Gad}(u)$ or
$\operatorname{Gad}(v)$.

\end{proof}

Next, we will prove that if $G$ has a vertex cover $V_{\text{sol}}$ of size $k$, then $N_G$ can be made orchard by deleting $k$ reticulate arcs.
For every $v \in V_{\text{sol}}$, we delete the arc $w_{4}^{v}r_{4}^{v}$ in Gad($v$), and denote the resulting network as $N_{G}^{E}$.

\begin{figure}[htbp]
 \centering
 \includegraphics[width=0.8\linewidth]{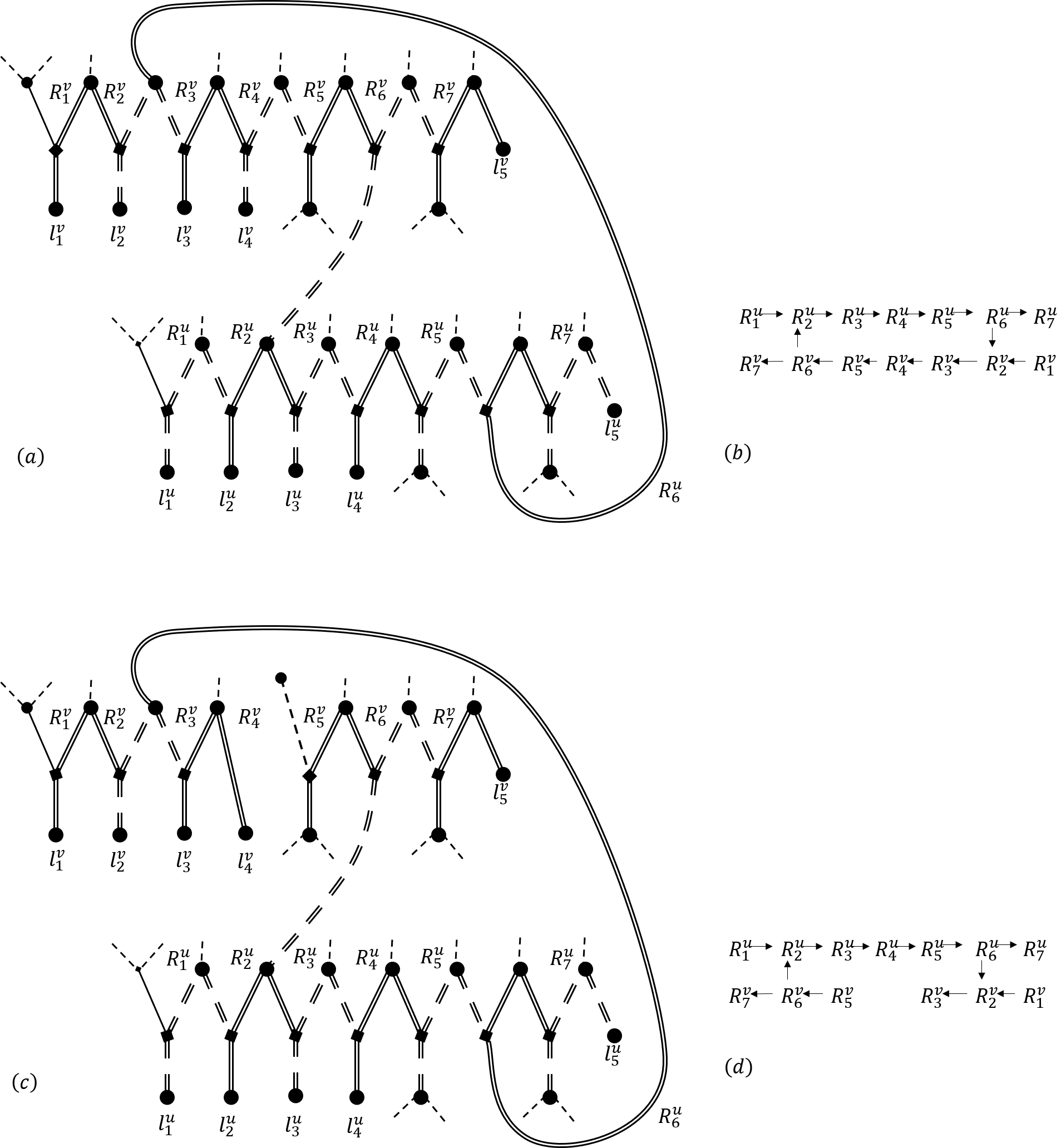}
 \caption{The effect of deleting the arc $w^v_4r^v_4$ in a selected gadget. Panels (a) and (c) show the relevant parts of $N_G$ and $N_G^E$, respectively. Panels (b) and (d) show the corresponding parts of the cherry-cover auxiliary graphs; the directed cycle present in (b) is destroyed in (d).}

 \label{fig:example}
\end{figure}

\begin{lemma}\label{lem:N_G}
Let $V_{\mathrm{sol}}$ be a vertex cover of $G$. Let $N^E_G$ be the network
obtained from $N_G$ by deleting, for each $v\in V_{\mathrm{sol}}$, the arc
$w^v_4 r^v_4$. Then $N^E_G$ is tree-based.
\end{lemma}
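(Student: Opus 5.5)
The plan is to use Lemma~\ref{lem:Wfence}. It suffices to show that the zig-zag decomposition of $N^E_G$ (unique by Lemma~\ref{lem:zig-zag}) contains no W-fence. Since $N_G$ is tree-based (Lemma~\ref{lem:3regular}), its decomposition consists only of M-fences and N-fences. Deleting the arc $w^v_4 r^v_4$ for $v\in V_{\mathrm{sol}}$ is a purely local change, so only the maximal zig-zag trails through that arc are affected. The same holds after the subsequent suppressions of $w^v_4$ and $r^v_4$. Every other trail of $N_G$ remains a maximal zig-zag trail of $N^E_G$ of the same type. The reason is that its vertices keep their degrees and types: the deletions in distinct gadgets are disjoint, and the only vertices whose degrees change are $w^v_4$ and $r^v_4$.

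First I will check that the deletion is legal and describe the local result. The tail $w^v_4$ has outdegree $2$ and the head $r^v_4$ is a reticulation, so the arc is deletable. After deletion, $w^v_4$ has indegree $1$ (from $m^v_1$) and outdegree $1$ (to $r^v_5$), so it is suppressed into an arc $m^v_1 r^v_5$. Likewise $r^v_4$ has indegree $1$ (from $w^v_3$) and outdegree $1$ (to the leaf $l^v_4$), so it is suppressed into a tree arc $w^v_3 l^v_4$. I will also check that no parallel arcs arise: $m^v_1$'s other child is $r^v_0$, and $w^v_3$'s other child is $r^v_3$.

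Next I will recompute the trails through the affected region. The principal N-fence $(r^v_0r^v_1, w^v_1r^v_1, w^v_1r^v_2,\dots,w^v_7l^v_5)$ of length $15$ splits at the position of the deleted arc. The initial segment $r^v_0r^v_1, w^v_1r^v_1,\dots,w^v_3r^v_3$ now continues with the new tree arc $w^v_3l^v_4$, since this arc shares the tail $w^v_3$. The trail therefore ends at a leaf. It starts at the reticulation $r^v_0$ and has odd length $7$, so it is an N-fence. The remaining segment $w^v_4r^v_5$ (now $m^v_1r^v_5$), $w^v_5r^v_5, w^v_5r^v_6,\dots,w^v_7l^v_5$ must be merged with the trail of $N_G$ containing $m^v_1 r^v_0$ and $m^v_1w^v_4$, because $m^v_1$'s arcs are now $m^v_1r^v_0$ and $m^v_1r^v_5$. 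In $N_G$, the arcs $m^v_1r^v_0, m^v_4r^v_0$ lie in a trail together with $m^v_4 m^v_3$ and the chain through $m^v_3, m^v_2, m^v_1$. I will trace the new trail explicitly. I expect it to begin at a tree-vertex tail and end at the leaf arc $w^v_7l^v_5$, and to have odd length with exactly one reticulation tail endpoint, or alternatively to be an M-fence. In either case all of its tails are tree vertices except possibly one end.

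The main obstacle is this bookkeeping around $r^v_0$, $m^v_1$ and the inter-gadget arcs. The tree vertices $w^v_5, w^v_6, w^v_7$ and $w^{u}_{\pi}$ receive arcs from reticulations $r^u_\tau$ of neighbouring gadgets, and I must confirm that no merged trail acquires two reticulation tails. The cleanest way is to note that in any trail the reticulation tails can occur only at the ends. I will then check that each affected trail has at most one end whose tail is a reticulation. The arc out of $r^v_0$ is the start of the first piece, and the other piece ends at a leaf, so neither piece can become a W-fence. As a fallback I will exhibit a cherry cover directly and invoke Lemma~\ref{lem:cherry}. That cover keeps the unaffected shapes of a cover of $N_G$ and replaces the shapes at $w^v_3$ and $m^v_1$ by the cherry shape $\{w^v_3r^v_3, w^v_3l^v_4\}$, with the reticulation cherry shapes shifted along the second piece.
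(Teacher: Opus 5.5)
Your proposal is correct and follows the paper's route: by Lemma~\ref{lem:Wfence} it suffices to rule out W-fences. After deleting $w^v_4r^v_4$ and suppressing $w^v_4,r^v_4$, the only new trails are the N-fence $(r^v_0r^v_1,\dots,w^v_3r^v_3,w^v_3l^v_4)$ and a trail with end arc $w^v_7l^v_5$, and neither can have two reticulation-tailed ends. Your handling of the merge at $m^v_1$ is more careful than the paper's. Two harmless slips: the relevant trail of $N_G$ is only $(m^v_4m^v_3,m^v_4r^v_0,m^v_1r^v_0,m^v_1w^v_4)$, so the merged trail is an M-fence of length $10$; and the inter-gadget arcs enter $w^v_1,w^v_2,w^v_3$, not $w^v_5,w^v_6,w^v_7$.
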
 

\begin{proof}
We use the characterization of tree-based networks by maximal zig-zag
decompositions. By construction, $N_G$ has a zig-zag decomposition whose
components are $M$-fences and $N$-fences, and hence it contains no
$W$-fence. Consider a vertex $v\in V_{\mathrm{sol}}$. In the principal $N$-fence of
$\operatorname{Gad}(v)$, deleting the reticulate arc $w^v_4r^v_4$ breaks
this $N$-fence into two maximal zig-zag trails. One is an $M$-fence and the
other is an 
$N$-fence. No $W$-fence is created, because neither of the two resulting trails has both end arcs incident with reticulations in the
configuration required for a $W$-fence.
All other maximal zig-zag trails are unchanged, except possibly for being
truncated at the deleted arc. Such truncation cannot create a new $W$-fence; it only shortens an existing $M$- or $N$-fence. Therefore the zig-zag decomposition of $N^E_G$ contains no $W$-fence. By Lemma~\ref{lem:Wfence},
$N^E_G$ is tree-based.
\end{proof}

\begin{lemma} \label{lem:N_GO}
Let $V_{\mathrm{sol}}$ be a vertex cover of $G$, and let $N^E_G$ be obtained
from $N_G$ by deleting $w^v_4r^v_4$ for every $v\in V_{\mathrm{sol}}$.
Then $N^E_G$ is orchard.
\end{lemma}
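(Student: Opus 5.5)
We will prove that $N^E_G$ is orchard by exhibiting an HGT-consistent labeling and invoking Theorem~\ref{thm:HGT}. Equivalently, we could construct a cherry cover of $N^E_G$ (it exists by Lemma~\ref{lem:N_G} and Lemma~\ref{lem:cherry}) and show that its auxiliary graph is acyclic, then invoke Theorem~\ref{thm:acyclic}. We choose the labeling route because it is local and reduces to choosing, for every reticulation, which incoming arc is horizontal. The guiding principle comes from the tree-based structure. After the deletions, every remaining reticulation $r$ has one incoming arc playing the role of a tree arc and one playing the role of a linking arc, and we must make exactly the linking arc horizontal.

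\textbf{Step 1: designate the horizontal arc at each reticulation.} Inside each principal $N$-fence we follow the pattern $\mathcal R v$ of Lemma~\ref{lem:N-fence}. In each reticulated cherry shape $\{\mathrm{head}(a_{i-1})c_{i-1},a_i,a_{i+1}\}$ the middle arc $a_i=w^v_j r^v_{j+1}$ is the linking arc. Thus $w^v_j r^v_{j+1}$ is designated horizontal and $w^v_{j+1}r^v_{j+1}$ vertical. For $r^v_0$ we make $m^v_4r^v_0$ horizontal and $m^v_1r^v_0$ vertical.

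For $v\in V_{\mathrm{sol}}$, the reticulation $r^v_4$ disappears after suppression. The fence then splits as in the proof of Lemma~\ref{lem:N_G}: the part $r^v_0,\dots ,w^v_3$ together with $w^v_4$ becomes an $N$-fence ending in an $M$-type piece. The designations below $w^v_4$ are unchanged, and the arc formerly entering $r^v_4$ is now a plain tree arc.

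\textbf{Step 2: assign the numerical labels.} We build $t$ top-down, gadget by gadget in an order compatible with the edges of $G$. Labels must strictly increase along vertical arcs and stay equal along the designated horizontal arcs, with every internal vertex having a child of strictly larger label. This last condition holds automatically: every tree vertex has at least one vertical out-arc (to a leaf, a tree vertex, or the vertical parent-arc of another reticulation), because in a fence each $w^v_j$ is the tail of exactly one designated horizontal arc.

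The only source of inconsistency is the inter-gadget arcs $r^u_{\tau_u(v)}w^v_{\pi_v(u)}$, which force $t(w^v_{\pi_v(u)})>t(r^u_{\tau_u(v)})$. The designated horizontal arcs chain the labels along a fence: $t(w^v_j)=t(r^v_{j+1})<t(w^v_{j+1})$. Hence each fence behaves like a totally ordered ``time line'' of increasing labels from $r^v_0$ to $w^v_7$. For an edge $uv$, the two inter-gadget arcs demand that a late point of fence $u$ (index $5$--$7$) precede an early point of fence $v$ (index $1$--$3$), and vice versa. This two-way demand is exactly the cycle of Lemma~\ref{lem:path}.

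\textbf{Step 3: break the cycles using the vertex cover.} For $v\in V_{\mathrm{sol}}$, deleting $w^v_4r^v_4$ cuts the chain of equalities at position $4$. The lower part $w^v_4,r^v_5,\dots ,w^v_7$ is then no longer tied to the upper part $r^v_0,\dots ,r^v_3,w^v_1,\dots ,w^v_3$. We may therefore assign the upper part of every covered gadget very small labels (phase~1). Next come all uncovered gadgets (phase~2); these form an independent set, so none of their inter-gadget arcs go to one another. Last come the lower parts of covered gadgets (phase~3).

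We then check each inter-gadget arc $r^u_{5..7}\to w^v_{1..3}$. If $u$ is uncovered, then $v$ is covered, and the arc must go from phase~2 into phase~1, which is the wrong direction. This is the delicate point and the main obstacle. We resolve it by noting that the arc enters $w^v_{\pi_v(u)}$, a tree vertex that has no equality constraint from above. Its upper chain can be lifted: phase~1 must only precede $w^v_4$, and $w^v_4$ is free after the deletion. We therefore order the upper parts of covered gadgets after the uncovered gadgets' reticulations, but before their own lower parts. The result is the global order upper(uncovered) $\le$ lower(uncovered) $<$ upper(covered) $<$ lower(covered). Inter-gadget arcs between two covered gadgets go from lower to upper, and we handle them by a further ordering of covered gadgets, or by placing all covered upper parts in one slab and all covered lower parts in a later slab. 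Arcs from covered into uncovered gadgets go from the lower part of a covered gadget into the upper part of an uncovered one, so we instead interleave by sorting along a topological order of the DAG $N^E_G$ restricted to the designated-horizontal-contracted graph.

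In the end, the cleanest formulation is the following. Contract all designated horizontal arcs, and show that the resulting graph is acyclic, using that every cycle must traverse some fence through positions $\le 3$ to $\ge 5$ and that every edge of $G$ meets $V_{\mathrm{sol}}$. Then set $t$ to be a topological rank. The rank gives equality on horizontal arcs, strict increase otherwise, and exactly one horizontal in-arc per reticulation, so $t$ is HGT-consistent.
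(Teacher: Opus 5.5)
\textbf{There is a genuine gap.} Your Step~1 choice of horizontal arcs cannot be realized by any non-temporal labeling, and the rest of the proposal inherits the error.

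At $r^v_0$ you make $m^v_4r^v_0$ horizontal. But the arcs $m^v_4m^v_3$, $m^v_3m^v_2$, $m^v_2m^v_1$ end in tree vertices, so every non-temporal labeling has $t(m^v_4)<t(m^v_1)\le t(r^v_0)$. The horizontal arc at $r^v_0$ must therefore be $m^v_1r^v_0$. In your closing formulation, contracting $m^v_4r^v_0$ creates the cycle $m^v_4m^v_3m^v_2m^v_1r^v_0$ at once.

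Inside the fence you are off by one: $a_{2j}=w^v_jr^v_j$, and the linking arc $p_yp_x$ of the shape $\{r^v_jc_j,\,w^v_jr^v_j,\,w^v_jr^v_{j+1}\}$ is $w^v_jr^v_j$. This choice is in fact forced:
$r^v_0$ has the single child $r^v_1$, so $r^v_0r^v_1$ is vertical; hence $w^v_1r^v_1$ is horizontal; hence $w^v_1r^v_2$ is vertical (so that $w^v_1$ has a strictly larger child); hence $w^v_2r^v_2$ is horizontal; and so on. Under your choice, $w^v_1r^v_2$ horizontal forces $w^v_1r^v_1$ vertical, which forces $r^v_0r^v_1$ horizontal and violates the second labeling condition at $r^v_0$. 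Your chain $t(w^v_j)=t(r^v_{j+1})<t(w^v_{j+1})$ also contradicts your own choice, since a vertical arc $w^v_{j+1}r^v_{j+1}$ needs $t(w^v_{j+1})<t(r^v_{j+1})$.

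Your Step~3 ordering is backwards. Every inter-gadget arc $r^u_{\tau_u(v)}w^v_{\pi_v(u)}$ runs from positions $5$--$7$ of $\operatorname{Gad}(u)$ to positions $1$--$3$ of $\operatorname{Gad}(v)$. So the lower parts of covered gadgets must come \emph{first} and their upper parts \emph{last}: lower(covered) $<$ upper(uncovered) $<$ lower(uncovered) $<$ upper(covered).

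The paper realizes exactly this order with explicit labels:
$2$--$4$ for covered lower parts, then $i+4\in\{5,\dots,11\}$ for uncovered gadgets, then $12$--$14$ for covered upper parts. It sets $t(m^v_1)=t(r^v_0)=0$, makes the connecting tree negative, and has $r^v_4,w^v_4$ suppressed in covered gadgets, as you correctly note. It then checks in three cases that each inter-gadget arc is strictly increasing.

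Your first phase order violates every inter-gadget arc. Your patched order still violates every arc leaving a covered gadget, and ``a further ordering'' or ``interleaving by a topological order'' does not repair this.

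Your closing contraction/topological-rank formulation is a valid alternative to explicit numbers. It works only with the forced designation ($m^v_1r^v_0$ and all $w^v_jr^v_j$) and an actual acyclicity proof. For example:
\begin{itemize}
\item The contraction of each single gadget is acyclic, and the connecting tree only feeds into gadgets. So a cycle enters every gadget it visits at a class of position $\le 3$ and leaves it from a class of position $\ge 5$.
\item In a covered gadget, the classes at positions $1$--$3$ reach only each other and leaves.
\item Hence all gadgets on the cycle are uncovered, and consecutive ones are adjacent in $G$. This contradicts that $V(G)\setminus V_{\mathrm{sol}}$ is independent.
\end{itemize}
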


\begin{proof}
By Theorem \ref{thm:HGT}, it suffices to construct an HGT-consistent labeling of
$N^E_G$.
We define a labeling $t:V(N^E_G)\to \mathbb R$ as follows. First set
$t(\rho)=0$. Along the tree structure connecting the gadgets, assign labels
strictly increasing from each parent to each child. In particular, the
vertices $s_i$, $\rho^v$,$m^v_4$,$m^v_3$, and $m^v_2$ are assigned labels strictly larger than the label of their respective parents. After these
labels are assigned, subtract a sufficiently large constant from all of
them so that they are all strictly negative.
For every $v\in V(G)$, set
$t(m^v_1)=t(r^v_0)=0.$

The remaining vertices in the principal part are labeled as follows. 
If $v\in V_{\mathrm{sol}}$, define
$t(r^v_1)=t(w^v_1)=12,
t(r^v_2)=t(w^v_2)=13,
t(r^v_3)=t(w^v_3)=14,$
and $t(r^v_5)=t(w^v_5)=2,
t(r^v_6)=t(w^v_6)=3,
t(r^v_7)=t(w^v_7)=4.$
Define $t(r^v_4)=14$. Since the arc $w^v_4 r^v_4$ has been deleted, the reticulation $r^v_4$ has the unique horizontal incoming arc $w^v_3 r^v_4$.
 If $v\notin V_{\mathrm{sol}}$, define
$t(r^v_i)=t(w^v_i)=i+4\quad\text{for every }i\in[7].$

Finally, for every leaf $l$ with parent $p_l$, set
$t(l)=t(p_l)+1.$

We verify that $t$ is HGT-consistent. First, every arc is non-decreasing
with respect to $t$. Tree arcs are assigned strictly increasing labels.
For each reticulation $r$, one incoming arc is horizontal, and every other incoming arc is vertical. Thus equality occurs only on arcs entering reticulations.
Second, every internal vertex has at least one child with strictly larger
label. This is immediate for the tree vertices in the global connecting
structure. For the vertices inside each gadget, the assignment above
ensures that each tree vertex has either a vertical child in the principal
part or a leaf descendant with larger label.
Third, each reticulation has exactly one incoming horizontal arc. For
reticulations inside a gadget not selected by the vertex cover, this
horizontal arc is the prescribed arc $w^v_i r^v_i$. For gadgets selected by the vertex cover, the deletion of $w^v_4r^v_4$ leaves $w^v_3r^v_4$ as the unique horizontal incoming arc to $r^v_4$. For $r^v_0$, the unique
horizontal incoming arc is $m^v_1r^v_0$.
It remains to check the inter-gadget arcs. Let $uv\in E(G)$. Since $V_{\mathrm{sol}}$ is a vertex cover, at least one of $u$ and $v$ belongs to $V_{\mathrm{sol}}$.
Consider an inter-gadget arc entering $\operatorname{Gad}(v)$, say
\[
r^u_{\tau_u(v)}w^v_{\pi_v(u)}.
\]
If $v\in V_{\mathrm{sol}}$, then $w^v_{\pi_v(u)}\in\{w^v_1,w^v_2,w^v_3\}$ has label $12$, $13$, or $14$. The tail $r^u_{\tau_u(v)}$ belongs to $\{r^u_5,r^u_6,r^u_7\}$, whose label is either one of $2,3,4$ if $u\in V_{\mathrm{sol}}$, or one of $9,10,11$ if $u\notin V_{\mathrm{sol}}$. Hence this inter-gadget arc is strictly increasing.
If $v\notin V_{\mathrm{sol}}$, then $u\in V_{\mathrm{sol}}$. In this case the tail lies in $\{r^u_5,r^u_6,r^u_7\}$ and has label $2$, $3$, or $4$, while the head lies in $\{w^v_1,w^v_2,w^v_3\}$ and has label $5$, $6$, or $7$. Hence the arc is again strictly increasing. Therefore all inter-gadget arcs are vertical and cannot create an additional horizontal incoming arc to any reticulation. 
Therefore $t$ is a non-temporal labeling in which every reticulation has
exactly one incoming horizontal arc. Thus $t$ is HGT-consistent. By
Theorem \ref{thm:HGT}, $N^E_G$ is orchard.
\end{proof}

\begin{lemma} \label{lem:min}
Let $G$ be a 3-regular graph, and let $N_G$ be the network obtained by the above reduction. Then, for every integer $k$, the graph $G$ has a vertex cover of size at most $k$ if and only if
\[
E_{\mathrm{OR}}(N_G)\le k.
\]
\end{lemma}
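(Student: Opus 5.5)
We prove the two directions separately, using the lemmas already established for $N_G$.

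\emph{($\Rightarrow$)} Suppose $G$ has a vertex cover $V_{\mathrm{sol}}$ with $|V_{\mathrm{sol}}|\le k$. Delete the arc $w^v_4r^v_4$ for each $v\in V_{\mathrm{sol}}$. Each such arc is deletable: its tail $w^v_4$ is a tree vertex of outdegree $2$, and its head $r^v_4$ is a reticulation. The arcs lie in distinct gadgets, so the deletions are independent, and exactly $|V_{\mathrm{sol}}|\le k$ arcs are removed. By Lemma~\ref{lem:N_GO}, the resulting network $N^E_G$ is orchard. Hence $E_{\mathrm{OR}}(N_G)\le |V_{\mathrm{sol}}|\le k$.

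\emph{($\Leftarrow$)} Suppose $E_{\mathrm{OR}}(N_G)\le k$. Then there is a set $A$ of deletable reticulate arcs with $|A|\le k$ such that $N_G-A$ is orchard. Define
\[
V_A=\{v\in V(G): A \text{ contains an arc of the extended principal part of } \operatorname{Gad}(v)\}.
\]
By Lemma~\ref{lem:Gad}, every edge $uv\in E(G)$ has $u\in V_A$ or $v\in V_A$, so $V_A$ is a vertex cover of $G$.

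It remains to show $|V_A|\le|A|$. For each $v\in V_A$, choose one arc of $A$ lying in the extended principal part of $\operatorname{Gad}(v)$. The extended principal parts of distinct gadgets are pairwise arc-disjoint, because every arc in them has both endpoints inside $\operatorname{Gad}(v)$. The inter-gadget arcs are not part of any extended principal part. Therefore the chosen arcs are distinct, the map $v\mapsto$ chosen arc is injective, and $|V_A|\le |A|\le k$.

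The main point requiring care is Lemma~\ref{lem:Gad}. One must confirm that suppressing degree-$(1,1)$ vertices and removing parallel arcs after deleting arcs \emph{outside} the extended principal parts leave both $N$-fences of $\operatorname{Gad}(u)$ and $\operatorname{Gad}(v)$ intact. One must also confirm that the inter-gadget paths required by Lemma~\ref{lem:path} survive. Since we assume Lemma~\ref{lem:Gad}, the only remaining check in this proof is the disjointness argument above, which follows from the construction. Combining both directions proves the equivalence.
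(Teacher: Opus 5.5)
Your proof is correct and follows the paper's own proof essentially step for step: the forward direction deletes $w^v_4r^v_4$ for each $v\in V_{\mathrm{sol}}$ and invokes Lemma~\ref{lem:N_GO}, and the converse defines $V_A$ via the extended principal parts and applies Lemma~\ref{lem:Gad} to show it is a vertex cover. Your explicit checks that each $w^v_4r^v_4$ is deletable, and that extended principal parts of distinct gadgets are arc-disjoint (so the choice $v\mapsto$ arc is injective), make precise what the paper compresses into ``each vertex in $V_A$ is charged to at least one deleted arc.''
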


\begin{proof}
Suppose first that $G$ has a vertex cover $V_{\mathrm{sol}}$ with $|V_{\mathrm{sol}}|\le k$. For every $v\in V_{\mathrm{sol}}$, delete the arc $w^v_4r^v_4$ in $\operatorname{Gad}(v)$. By Lemma~\ref{lem:N_GO}, the resulting network is orchard. Hence
\[
E_{\mathrm{OR}}(N_G)\le |V_{\mathrm{sol}}|\le k.
\]

Conversely, suppose that $E_{\mathrm{OR}}(N_G)\le k$. Then there exists a set $A$ of at most $k$ deletable arcs such that $N_G-A$ is orchard. Let $P_v$ denote the set of arcs in the extended principal part of $\operatorname{Gad}(v)$, and define
\[
V_A=\left\{v\in V(G): A\cap P_v\ne\emptyset\right\}.
\]
By Lemma~\ref{lem:Gad}, for every edge $uv\in E(G)$, the set $A$ contains an arc in the extended principal part of $\operatorname{Gad}(u)$ or in the extended principal part of $\operatorname{Gad}(v)$. Therefore at least one of $u$ and $v$ belongs to $V_A$. Hence $V_A$ is a vertex cover of $G$.

Moreover, each vertex in $V_A$ is charged to at least one deleted arc, and so
\[
|V_A|\le |A|\le k.
\]
Thus $G$ has a vertex cover of size at most $k$.
\end{proof}

\medskip
\par \noindent \textbf{Proof of Theorem \ref{main theorem}: }

Let $A_{\text{sol}} \subseteq A(N)$ be a set of at most $k$ arcs. Define $N' = N - A_{\text{sol}}$. Whether $N'$ is an orchard network can be verified in polynomial time \cite{janssen2021}, so the $E_{\mathrm{OR}}$-Distance Decision Problem belongs to NP.

Furthermore, consider an instance $(G, k)$ of the vertex cover decision problem on 3-regular graphs. As described in the reduction above, we can construct an instance $(N_G, k)$ of the $E_{\mathrm{OR}}$-Distance Decision Problem by applying a polynomial-time transformation to $G$. The construction has size polynomial in $|V(G)|+|E(G)|$, since each gadget contains a constant number of vertices and arcs, and each edge of $G$ contributes exactly two inter-gadget arcs.

By Lemma \ref{lem:min}, this reduction preserves the answer: $(G, k)$ is a yes-instance of the 3-regular vertex cover problem if and only if $(N_G, k)$ is a yes-instance of the $E_{\mathrm{OR}}$-Distance Decision Problem.

Since the 3-regular vertex cover decision problem is NP-complete, and the reduction is polynomial, it follows that the $E_{\mathrm{OR}}$-Distance Decision Problem is NP-hard. Combining this with the membership in NP, the problem is NP-complete.

As a consequence, the corresponding optimization problem of computing $E_{\mathrm{OR}}(N)$---that is, the minimum number of reticulate arcs that must be removed from $N$ to make it orchard---is NP-hard.

\begin{corollary}\label{cor:decision}
The decision problem $E_{\mathrm{OR}}$-Distance is NP-complete.
\end{corollary}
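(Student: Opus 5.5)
The corollary combines membership in NP with the hardness already shown in the proof of Theorem~\ref{main theorem}. I will handle the two parts separately.

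\textbf{Membership in NP.} A certificate is a set $A'\subseteq A(N)$ with $|A'|\le k$, which has size polynomial in $N$. The verifier proceeds in three steps.
\begin{enumerate}
\item It checks that every arc in $A'$ is deletable, meaning its tail has outdegree $2$ and its head is a reticulation.
\item It performs the deletions, suppresses vertices of indegree and outdegree $1$, and removes parallel arcs.
\item It decides whether the resulting network $N-A'$ is orchard.
\end{enumerate}

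Steps 1 and 2 are linear-time graph operations. Step 3 is polynomial by \cite{janssen2021}. Alternatively, one can greedily apply cherry and reticulation-cherry reductions: by the results underlying Theorem~\ref{thm:acyclic}, the order in which available reductions are applied does not affect whether $N$ reduces to a single leaf.

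One subtlety needs care. If arcs are deleted sequentially, an arc of $A'$ might stop being deletable after earlier deletions, or might disappear through suppression. I will therefore take the paper's convention that $N-A'$ deletes the set $A'$ simultaneously and then suppresses. The check that each arc is deletable is then done in $N$ itself, which is clearly polynomial.

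\textbf{NP-hardness.} I start from a $3$-regular instance $(G,k)$ and build $N_G$ in polynomial time. The gadget for each vertex has constant size, each edge of $G$ adds two arcs, and the connecting caterpillar has size linear in $|V(G)|$. By Lemma~\ref{lem:min}, $G$ has a vertex cover of size at most $k$ if and only if $E_{\mathrm{OR}}(N_G)\le k$, which holds exactly when $(N_G,k)$ is a yes-instance. Since \textsc{Degree-3 Vertex Cover} is NP-complete, the decision problem is NP-hard. Together with membership in NP, this gives NP-completeness.

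The main obstacle is the NP-membership step rather than hardness. The verifier must be well defined under the paper's deletion semantics, in particular the simultaneous versus sequential issue and the removal of parallel arcs. It must also rely on a polynomial-time recognition algorithm for orchard networks, for which I would cite \cite{janssen2021}.
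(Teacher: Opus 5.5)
Your proposal follows the paper's argument. Membership in NP comes from guessing the deleted arc set $A'$ and testing in polynomial time whether $N-A'$ is orchard, citing \cite{janssen2021}. NP-hardness comes from the polynomial-size construction of $N_G$ together with Lemma~\ref{lem:min}. Your extra care about deletability and simultaneous deletion makes the verifier more explicit than the paper's; to finish that part, the verifier should also reject if $N-A'$ is not a valid network, for example when two arcs of $A'$ enter the same reticulation or leave the same tree vertex.
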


\section{Discussion}\label{sec:discussion}

In this paper, we established that computing the minimum number of reticulate arc deletions needed to transform a network into an orchard network is NP-hard. The proof is obtained by a polynomial-time reduction from the vertex cover problem on 3-regular graphs. This result confirms that the arc-deletion-based proximity measure to orchard networks is computationally intractable.

Several open problems remain for future investigation. One direction is to compare different proximity measures for orchard networks. In particular, it would be interesting to understand whether there is a quantitative relationship between the arc-deletion distance and the leaf-addition distance. Such a comparison may depend strongly on structural restrictions, for example whether the network is binary, tree-based, or belongs to another well-behaved subclass of phylogenetic networks.

Another direction concerns the comparison between arc deletion and vertex deletion.
Given a rooted phylogenetic network $N$, one may define the vertex-deletion distance to orchard networks as the minimum number of vertices whose removal, followed by the necessary suppressions and simplifications, transforms $N$ into an orchard network. This raises the following general question: how does this vertex-deletion distance compare with the arc-deletion distance studied in this paper?

There are two possible types of relationships one may seek. First, one may ask whether a small arc-deletion distance implies a small vertex-deletion distance. This is not immediate, since deleting one vertex may remove several incident arcs, while deleting a single reticulate arc may only destroy one local obstruction. Second, one may ask whether a small vertex-deletion distance implies a small arc-deletion distance. This direction is also non-trivial, since removing a vertex can simultaneously eliminate several cycles or several non-orchard structures that may require many separate arc deletions. Therefore, arc deletion and vertex deletion may measure different kinds of obstructions to orchardness.

From an algorithmic point of view, this comparison suggests several concrete problems. One may study whether the vertex-deletion version is NP-hard, whether it admits fixed-parameter algorithms with respect to the deletion budget, and whether approximation-preserving reductions exist between the arc-deletion and vertex-deletion variants. If such reductions can be established, known approximation algorithms or parameterized techniques for classical problems such as Vertex Cover or Feedback Arc Set may become applicable to orchard-network proximity measures.

More broadly, it would be useful to investigate approximation algorithms and parameterized complexity for $E_{\mathrm{OR}}$-Distance. In particular, determining whether the problem is fixed-parameter tractable with respect to the deletion budget $k$, or whether it is APX-hard, would deepen our understanding of the computational landscape of orchard network transformations.

\section*{Declaration of generative AI and AI-assisted technologies in the writing process}
During the preparation of this work, the authors used OpenAI's ChatGPT to improve the readability and language of the manuscript and to assist with exposition and formatting. After using this tool, the authors reviewed and edited the content as needed and take full responsibility for the content of the publication.

\end{document}